\newcommand{\rea}{\mathbb{R}}
\newtheorem{mythm}{\bf Theorem}
\newtheorem{myass}{\bf Assumption}
\newtheorem{mylem}{\bf Lemma}
\newtheorem{myrem}{\bf Remark}
\DeclareMathAlphabet\mathbfcal{OMS}{cmsy}{b}{n}
\title{\LARGE \bf
Leader-follower synchronization and ISS analysis for a
network of boundary-controlled wave PDEs}
\author{Luis Aguilar, Yury~Orlov, Alessandro~Pisano \\ $\;$\\
\textbf{\emph{This document is an enhanced version of a companion paper, currently under review for journal} } \\ \textbf{ \emph{publication, containing more detailed proofs of the main Theorems 1 and 2}.}
\thanks{This work has been supported by Mexican CONACYT grants 285279 and A1-S-9270.}
\thanks{A. Pisano ({\tt\small apisano@unica.it}) is with the Department
of Electrical and Electronic Engineering (DIEE), University of Cagliari, Cagliari,
09123, Italy. Y. Orlov ({\tt\small yorlov@cicese.mx}) is with CICESE Research Center, Electronics and Telecommunication Department, Ensenada, Mexico. L. Aguilar ({\tt \small laguilarb@ipn.mx}) is with Instituto Polit\'ecnico Nacional, Ave. Instituto Polit\'ecnico Nacional 1310 Col. Nueva Tijuana, Tijuana, 22435, M\'exico.}
}
\begin{document}

\maketitle

\begin{abstract}
A network of agents, modeled by a class of wave PDEs, is under investigation. One agent in the network plays the role of a leader, and all the remaining $\lq\lq$follower" agents are required to asymptotically track the state of the leader. Only boundary sensing of the agent's state is assumed, and each agent is controlled through the boundary by Neumann-type actuation. 
A linear interaction protocol is proposed and analyzed by means of a Lyapunov-based approach. A simple set of tuning rules, guaranteeing the exponential achievement of synchronization, is obtained. In addition, an exponential ISS relation, characterizing the effects on the tracking accuracy of {boundary and in-domain} disturbances, is derived for the closed loop system.
\end{abstract}


\section{Introduction}\label{sect1}

The \emph{consensus} problem seeks to enforce agreement amongst the states of networked dynamical systems by penalizing their local disagreement with the neighboring nodes in a dynamic manner. 
A particular class of consensus problems is the leader-follower decentralized tracking, where a specific agent in the network plays the role of a leader and all remaining follower agents aim to synchronize their state evolutions  with that of the leader (see e.g. \cite{cao2012distributed}).

It is worth noting that the consensus problem for networks of distributed parameter systems has not received yet the same level of attention as its finite-dimensional counterpart.

In \cite{li2014exact,li2018exact}, exact synchronization for a set of coupled wave processes, {part of which} equipped by a boundary control input, was provided in the two cases of Dirichlet and Neumann actuation. In \cite{ConsensusWaveNeurocomput2018}, the leaderless consensus problem was addressed with reference to  multi agent systems where agents dynamics are governed by heat and wave dynamics with distributed control. In \cite{HeIET2018}, leader follower consensus for perturbed parabolic PDEs with distributed control was achieved by means of an adaptive unit-vector sliding mode controller. In \cite{demetriou2013synchronization}, the consensus problem for a network of agents modeled by a class of parabolic PDEs, and communicating through undirected communication topologies, has been studied. In \cite{PilloConsTACHeat,PilloConsCPDELeaderFollowHeat2016}  the leaderless and leader-following consensus problems for perturbed diffusion PDEs were solved through sliding-mode based boundary control. In \cite{BipartiteSCL2020}, the leader-less consensus problem was dealt with for a multi agent system where agents dynamics are governed by a class of perturbed boundary controlled wave processes. Motivated by the above state-of-art analysis, and with the aim of developing a leader-following consensus controller for networked wave PDEs,  a "pointwise-in-space" agreement between the follower and leader profiles,  is established in the present paper via Lyapunov analysis by using a linear PD-like local interaction rule. In contrast to the related investigation  \cite{BipartiteSCL2020} where the leader-less consensus was studied within the same wave PDEs framework, we focus on the leader-following  consensus case.  
An ISS analysis is also made to investigate the effect of {boundary and in-domain} disturbances on the closed-loop system accuracy. Certain tuning inequalities, which  are more restrictive, compared to those derived in the unperturbed scenario, are to be imposed on the controller parameters in order to ensure an exponential ISS inequality. The contribution to the existing literature is thus as follows.

i. The leader-following consensus problem is addressed and solved with reference to multi agent systems with agents' and leader dynamics, governed by the wave equation with {Neumann}-type boundary control.

ii. The proposed local interaction rule ensures the pointwise convergence to zero of the deviation between the leader and follower trajectories.


iii. The effects of boundary and in-domain disturbances on the consensus accuracy are constructively analyzed from the ISS standpoint.

The rest of the paper is outlined as follows. In Section~\ref{sect2} some mathematical preliminaries on graph theory and useful norm properties and definitions are recalled. The communication protocol providing consensus-tracking is studied in  Section~\ref{sect3}. The ISS analysis of the closed loop system in the presence of  {boundary and distributed in-domain} disturbances is made in Section \ref{sectDIST}. Simulation results are presented in Section~\ref{sect4}, and conclusions and perspectives for next investigations are collected in Section~\ref{sect5}.

\section{Mathematical Preliminaries and Notations}\label{sect2}

\subsection{Useful definitions and properties}
The Euclidean norm of the real-valued $n$-dimensional vector ${x} = [x_1,\dots,x_n]^T \in \rea^n$ is defined as
$\|{x}\|_{{2}}= \left( \Sigma_{i=1}^n |x_i|^\mathrm{2}\right)^{1/\mathrm{2}}\equiv\sqrt{{x}^T{x}}$.  The next well-known inequalities to hold true for all $x, y \in \rea^n$ and for arbitrary $\xi>0$, are recalled:
\begin{equation}{\small
\left|{x}^T{y}\right|\leq\|{x}\|_{{2}}\|{y}\|_{2} \leq \frac{\xi }{{2}}\|{x}\|_{{2}}^{2}+\frac{1}{{2\xi}}\|{y}\|_{{2}}^{2}, \;\;\;\; \xi > 0.}
\label{app1:HolderInequality}
\end{equation}
Given a symmetric positive definite matrix $M\in\rea^{n\times n}$, let $\lambda_m(M)$ and $\lambda_M(M)$ denote the minimum and maximum eigenvalues of $M$. 
The symbols ${ 1}_n=[1,1,\dots,1]^T\in\rea^n$ and ${ 0}_n=[0,0,\dots,0]^T\in\rea^n$ stand for the all-ones and all-zeros vectors. Let $x(t):\rea^+ \cup \left\{0 \right\} \rightarrow \Re$ be a scalar function. Then the notation
\begin{equation}\label{essdef}
E_s\left(x(t)\right) \triangleq ess \; \sup_{\tau\in [0,t]} x(\tau)
\end{equation}
is used for brevity.
\vspace{0.1cm}
{$L_2$ stands for the Hilbert space of square integrable scalar functions $z(\varsigma)$ on the domain $(0,1)$ with the corresponding $L_2$-norm
\begin{eqnarray}\label{define2}
  \|z(\cdot)\|_{L_2} = \sqrt{\int_0^1 z^{2} (\varsigma) d \varsigma}.
\end{eqnarray}}
{The symbol $L_\infty(0,T;L_2)$ is reserved for the set of functions $f(\varsigma,t)$ such that $f(\cdot,t)\in L_2$ for almost all $t\in (0,T),$  $\int_0^1 f(\varsigma,t)\phi(\varsigma) d\varsigma $ is Lebesgue measurable in $t$ for all $\phi(\cdot)\in L_2$, and $E_s\left( \int_0^1f^2(\varsigma,t) d\varsigma \right)<\infty$. It is said that  $f(\cdot)\in L^{loc}_\infty(L_2(a,b))$ iff $f(\cdot)\in L_\infty(0,T;L_2(a,b))$ for all $T>0$.}

$\mathrm{H}^{\ell}$, with $\ell=\textcolor{blue}{1},2,\dots$, denotes the Sobolev space of absolutely continuous scalar functions $w(\varsigma)$ on the domain $(0,1)$, with square integrable derivatives $w^{(k)}(\varsigma)$ up to order $\ell$ and the $\mathrm{H}^\ell$-norm
\begin{equation}
\begin{array}{c}
\|{z}(\cdot)\|_{\mathrm{H}^{\ell}}
=\sqrt{\int_0^1 \sum_{k=0}^\ell \left[{w}^{(k)}(\varsigma)\right]^2 ~d\varsigma}.
\end{array}\label{app1:L2normScalare}
\end{equation}
The first and second order derivatives ${z}^{(1)}(\varsigma)$ and  ${z}^{(2)}(\varsigma)$ will also be denoted as   ${z}^{(1)}(\varsigma)=z_\varsigma(\varsigma)$ and ${z}^{(2)}(\varsigma)=z_{\varsigma \varsigma}(\varsigma)$. {In addition, the notations
$$ L^n_2 =
\begin{array}{c}
                         \underbrace{L_2(0,1) \times L_2(0,1) \times \ldots \times L_2(0,1)} \\
                         n \;\; \texttt{times}
                       \end{array},$$}
$$H^{\ell,n} =
\begin{array}{c}
                         \underbrace{H^\ell(0,1) \times H^\ell(0,1) \times \ldots \times H^\ell(0,1)} \\
                         n \;\; \texttt{times}
                       \end{array}$$
are utilized and
\begin{eqnarray}\label{app1:L2normVector}
 {\|z(\cdot)\|_{L_2^n} = \sqrt{\sum_{i=1}^{n}   \|z_i(\cdot)\|^2_{L_2}},}   \;\;\;\;\;   \|w(\cdot)\|_{H^{\ell,n}} = \sqrt{\sum_{i=1}^{n} \|w_i(\cdot)\|^2_{H^\ell}}
\end{eqnarray}
stand, {respectively, for the $L_2$-norm of a vector function
$z(\varsigma)=[z_1(\varsigma), z_2(\varsigma), ...., z_n(\varsigma)] \in
L_2^n$ and} for the $H^\ell$-norm of a vector function $w(\varsigma)=[w_1(\varsigma), w_2(\varsigma), ...., w_n(\varsigma)] \in
H^{\ell,n}$.

The following well-known Lemma constitutes a vector counterpart of the Poincare inequality
\begin{mylem}\label{lemma1:pisanoResult}{\it
Let ${b}(\varsigma)\in\mathrm{H}^{1,n}$. Then, the following inequality holds:
\begin{equation}{\small
\begin{array}{c}
\|{b}(\cdot)\|_{{L_2^n}}^2\leq 2\left(\|{b}(i)\|_{{2}}^2+\|{b}_\varsigma(\cdot)\|_{{L_2^n}}^2\right),\;\; i=0,1.
\end{array}}\label{app1:pisanoResult}
\end{equation}}
\end{mylem}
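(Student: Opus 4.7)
The plan is to reduce the vector inequality to the scalar Poincaré-type estimate applied componentwise, and then sum up. I will treat the case $i=0$; the case $i=1$ is completely symmetric.

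First, for each component $b_j \in H^1(0,1)$ and for almost every $\varsigma \in (0,1)$, I would write the fundamental theorem of calculus in the form
\begin{equation*}
b_j(\varsigma) = b_j(0) + \int_0^\varsigma (b_j)_\varsigma(s)\, ds,
\end{equation*}
and then apply the elementary inequality $(\alpha+\beta)^2 \leq 2\alpha^2+2\beta^2$ together with the Cauchy--Schwarz inequality on the integral term to obtain
\begin{equation*}
b_j^2(\varsigma) \leq 2 b_j^2(0) + 2\varsigma \int_0^\varsigma (b_j)_\varsigma^2(s)\, ds \leq 2 b_j^2(0) + 2\int_0^1 (b_j)_\varsigma^2(s)\, ds.
\end{equation*}

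Next, I would integrate this pointwise estimate over $\varsigma \in (0,1)$, which removes the $\varsigma$-dependence on the right-hand side (since its upper bound is uniform in $\varsigma$) and yields the scalar Poincaré-type inequality
\begin{equation*}
\|b_j(\cdot)\|_{L_2}^2 \leq 2 b_j^2(0) + 2 \|(b_j)_\varsigma(\cdot)\|_{L_2}^2.
\end{equation*}
Finally, summing this inequality over $j=1,\ldots,n$ and recalling the definitions of the vector norms $\|\cdot\|_{L_2^n}$ and $\|\cdot\|_2$ yields the desired estimate
\begin{equation*}
\|b(\cdot)\|_{L_2^n}^2 \leq 2\bigl(\|b(0)\|_2^2 + \|b_\varsigma(\cdot)\|_{L_2^n}^2\bigr).
\end{equation*}
The case $i=1$ is handled identically by starting from $b_j(\varsigma) = b_j(1) - \int_\varsigma^1 (b_j)_\varsigma(s)\, ds$.

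There is no real obstacle in this proof; it is a routine application of the fundamental theorem of calculus plus Cauchy--Schwarz. The only mild subtlety worth mentioning is that the estimate relies on the embedding $H^1(0,1) \hookrightarrow C[0,1]$, which guarantees that the trace values $b_j(0)$ and $b_j(1)$ are well defined pointwise — this is exactly what allows one to replace the usual (zero-mean) Poincaré inequality with a boundary-trace version.
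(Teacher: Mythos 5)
Your proof is correct: the componentwise use of the fundamental theorem of calculus, the inequality $(\alpha+\beta)^2\leq 2\alpha^2+2\beta^2$, Cauchy--Schwarz, integration in $\varsigma$, and summation over components is exactly the standard argument for this boundary-trace Poincar\'e estimate, and the symmetric treatment of $i=0$ and $i=1$ is handled properly. The paper itself gives no proof of this lemma and merely cites \cite{PilloConsTACHeat} for it, so there is nothing to contrast with; your derivation is a complete and self-contained justification of the stated inequality.
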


\vspace{0.2cm}
{The specific formulation \eqref{app1:pisanoResult} of the Poincare inequality can be found in \cite{PilloConsTACHeat}.}

\subsection{Algebraic Graph Theory definitions and properties}

Consider a group of $n$ dynamical agents along with the undirected
graph $\mathbfcal{G}(\mathbfcal{V},\mathbfcal{E},\mathbfcal{A})$ modeling the communication topology among these systems, where $\mathbfcal{V}=\{1,\ldots,n\}$ is the node set and $\mathbfcal{E}\subseteq \{\mathbfcal{V} \times \mathbfcal{V}\}$ is the edge set. An edge $(i,j)\in \mathbfcal{E}$ if agents $i$ and $j$ can exchange information. The adjacency matrix $\mathbfcal{A}=[a_{ij}]$ associated with $\mathbfcal{G}$ is such that {$a_{ii}=0$,} $a_{ij}=1$  if $(j,i)\in \mathbfcal{E}$, and $a_{ij}=0$ otherwise. 
A path in an undirected graph $\mathbfcal{G}$ is a sequence of edges joining two nodes of the graph. An undirected graph is said to be \emph{connected} if there is a path between every pair of nodes. Throughout, the \emph{Laplacian Matrix} $\mathbfcal{L}=[\ell_{ij}]\in\rea^{n\times n}$, associated with the graph $\mathbfcal{G}$, is defined as $\ell_{ii}=\Sigma_{j=1,i\neq j}^n a_{ij}$ and $\ell_{ij}=-a_{ij}$, $i\neq j$. 

\section{Distributed coordinated tracking for Networked Wave Processes}\label{sect3}

Consider a set of $n$ dynamical agents $\mathbfcal{V}_f=\lbrace1,2,\dots,n\rbrace$, identified as \emph{followers}, which are governed by the wave equation, expressed in the  vector form
\begin{flalign}
u_{tt}(\varsigma,t)&=u_{\varsigma\varsigma}(\varsigma,t)
\label{sect1:barDynamic}\\
u(\varsigma,0)&= u^0(\varsigma), \;\;\; u_t(\varsigma,0)= u_t^0(\varsigma),
\label{sect1:barICs}\\
u_\varsigma(0,t)&=c_0 u_t(0,t)\label{sect1:freeend}\\
u_\varsigma(1,t)&=q(t).\label{sect1:actuation} 
\end{flalign}
Hereinafter, $u(\varsigma,t)=[u_1(\varsigma,t),u_2(\varsigma,t),\dots,u_n(\varsigma,t)]^T$ is the vector, collecting the states of all followers ($u_i(\varsigma,t)$ denotes the transverse displacement of the i-th agent at position $\varsigma\in(0,1)$ and time $t \geq 0$),  $q(t)=[q_1(t), q_2(t), \ldots, q_n(t)]^T$ is the vector, collecting the agents' Neumann-type boundary control inputs and $c_0$ a positive constant. Follower agents are supposed to be communicating each other through a static, undirected topology described by $\mathbfcal{G}_f(\mathbfcal{V}_f,\mathbfcal{E},\mathbfcal{A}_f)$.

{\begin{myrem}\label{remarkstability}
{\it 
By using the Lyapunov functional $W(t)=\frac{1}{2}\|\tilde{u}_\varsigma(\cdot,t)\|_{{L^n_2}}^2+\frac{1}{2}\|\tilde{u}_t(\cdot,t)\|_{{L^n_2}}^2$, 
whose time derivative along the solutions of the open-loop system \eqref{sect1:barDynamic}-\eqref{sect1:actuation} with $q(t)=0$ is $\dot W(t)=- c_0 \|\tilde{u}_t(0,t) \|_2^2$, it is concluded that system  \eqref{sect1:barDynamic}-\eqref{sect1:actuation}  is stable in the open-loop but not asymptotically stable, thus motivating the need for consensus-based control to achieve synchronization between agents. 
$\square$}\end{myrem}}
\vspace{0.1cm}

Suppose that in addition to the $n$ followers there exists a leader agent, labeled with the index number $0$ and governed by the unforced boundary-value problem
\begin{flalign}
 u_{0,tt}(\varsigma,t)&=u_{0,\varsigma\varsigma}(\varsigma,t), \label{sect1:leader} \\
 u_0(\varsigma,0) &= u_0^0(\varsigma), \;\; u_{0,t}(\varsigma,0) = u_{0,t}^0(\varsigma), \label{sect1:leaderIC}\\
 u_{0,\varsigma}(0,t)&=c_0 u_{0,t}(0,t),  \label{sect1:leaderBC0} \\
 u_{0,\varsigma}(1,t)&={0}.\label{sect1:leaderBC1} 
\end{flalign}

It is assumed that the leader's boundary information $(u_{0}(1,t), u_{0,t}(1,t))$ is available to a nonempty subset of followers. 
Let $a_{i0}=1$ if the leader communicates with the $i$-th follower ($i=1,2,\dots,n$), and $a_{i0}=0$ otherwise.


\begin{myass}\label{assumption:graph}
{\it Follower agents exchange information according to the static and undirected topology $\mathbfcal{G}_f(\mathbfcal{V}_f,\mathbfcal{E},\mathbfcal{A}_f)$ that is assumed to be connected, and the leader communicates with at least one follower.}
\end{myass}

The following regularity and compatibility conditions are in force {to deal with classical solutions of the boundary value problem in question (see \cite{curtain1995introduction} for details).}

\begin{myass}\label{assumption:1_distrurbance}
{\it
The ICs in \eqref{sect1:barICs} and \eqref{sect1:leaderIC} are such that
{\begin{equation}
\begin{array}{c}
  u^0(\varsigma),\ u_t^0(\varsigma)\in \mathrm{H}^{2,n}, \;\;\;
  u_0^0(\varsigma),\  u_{0,t}^0(\varsigma)\in \mathrm{H}^{2}
\end{array}
\label{sect1:regul}
\end{equation}}
and the following compatibility conditions hold
\begin{equation}
\begin{array}{c}
  u_\varsigma^0(0)=c_0 u_t^0(0), \;\;\;   u_\varsigma^0(1)=q(t) \\
  u_{0,\varsigma}^0(0)=c_0 u_{0,t}^0(0), \;\;\;   u_{0,\varsigma}^0(1)={0_n} \\
\end{array}
\label{sect1:compa}
\end{equation}
}\end{myass}

\vspace{.2cm}
{With the assumption above, the stability of the collective networks dynamics of the leader $(u_0(\cdot,t), u_{0,t}(\cdot,t))$ and followers $(u(\cdot,t),  u_{t}(\cdot,t))$ is studied in a proper Sobolev space being specified to  $\mathrm{H}^{1,n+1}\times L_2^{n+1}$.}


\subsection{Problem Statement}\label{sect3_1}

In the sequel, we design the agents' control inputs $q_i(t)$ such that all followers states $u_i(\varsigma,t)$ ($i=1,2,...,n$) asymptotically track the leader's state  $u_0(\varsigma,t)$. {Each agent communicates continuously to its neighbours by accessing to their local boundary measurements $u_i(1,t)$ and $u_{i,t}(1,t)$. Note that the leader agent is a source node of the overall directed graph including both leader and followers, and thus it does not receive any data and only sends its own boundary measurements to its neighbours. }
The control task is specifically to enforce the {exponential point-wise consensus relation
\begin{equation}
 \max_{x\in[0,1]}|u_i(\varsigma,t)-u_0(\varsigma,t)|^2  \leq \delta e^{-\alpha t},\;\;\;\;\forall~i\in \mathbfcal{V}_f
\label{sect1:consensus_tracking}
\end{equation}
for some nonnegative constants $\delta$ and $\alpha$.} {Inspired by the natural consensus algorithm for a network of double integrators (see e.g. \cite{renijrnc07natural}),}  we propose the local interaction protocol
\begin{equation}
q(t)= -k_1 \mathbfcal{M} \tilde{u}(1,t) - k_2   \mathbfcal{M} \tilde{u}_t(1,t),
\label{sect1:consensusProthhocol0}
\end{equation}
where $k_1, k_2$ are nonnegative tuning constants, 
\begin{flalign}
\tilde{u}(\varsigma,t)=&u(\varsigma,t)-1_n u_0(\varsigma,t)\label{sect2:u_tilde}
\end{flalign}
represents the deviation vector of the followers' states with respect to the leader profile, and $$\mathbfcal{M}=\mathbfcal{L}_f+diag\lbrace a_{10},a_{20},\dots,a_{n0}\rbrace\in\rea^{n\times n}$$. Notice that under Assumption \ref{assumption:graph} matrix $\mathbfcal{M}$ turns out to be symmetric and positive definite, see e.g.  \cite{cao2012distributed} for details.

{\begin{myrem}\label{remarkleaderstability}
{\it 
By using the Lyapunov functional $W_0(t)=\frac{1}{2}\|{u}_{0,\varsigma}(\cdot,t)\|_{{L_2}}^2+\frac{1}{2}\|{u}_{0,t}(\cdot,t)\|_{{L_2}}^2$,
whose time derivative along the solutions of \eqref{sect1:leader}-\eqref{sect1:leaderBC1} is $\dot W_0(t)=- c_0 \tilde{u}^2_{0,t}(0,t)$, one concludes that leader's dynamics are stable but not asymptotically stable.
It is seen, in particular, that system \eqref{sect1:leader}-\eqref{sect1:leaderBC1} possesses a multitude of constant stable equilibria of the type $u_0(\varsigma,t)=U_0=const$, from which it derives that the system possesses a zero eigenvalue, associated with a constant (in space) eigenfunction. All follower agents will thus eventually synchronize to a constant profile determined by the leader's initial conditions. It is worth noticing that autonomous leader's dynamics are often considered in the literature, see e.g. \cite{JadbaMOrseTAC03,LiRenTAC2014}. 
Considering non-autonomous leader's dynamics generally requires that all follower agents must know the leader's input signal, as in \cite{SONGSCL2009,ZhouSCL2012}, thereby compromising the distributed nature of the local interaction protocol. 
Only more recently (see, e.g., \cite{cao2012distributed}) this restriction 
has been successfully removed by means of nonlinear discontinuous local interaction laws capable of "rejecting" the matching disturbance represented by the exogenous leader's driving signal. 
This challenging task however remains beyond the scope of the paper which is the first research work, addressing the leader-following consensus problem for agents' dynamics, governed by the wave PDE. The above challenge will be pursued in our future research.
$\square$}\end{myrem}}
\vspace{0.1cm}

\subsection{Convergence analysis}\label{sect3_2}

The performance of the collective agents' dynamics under the local interaction protocol \eqref{sect1:consensusProthhocol0}
is going to be investigated and simple tuning rules are constructively derived such that condition \eqref{sect1:consensus_tracking} is guaranteed. The boundary value problem governing the deviation variable $\tilde{u}(\varsigma,t)$ then reads as
\begin{flalign}
&\tilde{u}_{tt}(\varsigma,t)=\tilde{u}_{\varsigma\varsigma}(\varsigma,t), \label{sect2:errDyn}\\
&\tilde{u}(\varsigma,0)={u}(\varsigma,0)-1_n u_0(\varsigma,0),\\
&\tilde{u}_t(\varsigma,0)={u}_{t}(\varsigma,0)-1_n u_{0,t}(\varsigma,0),\label{sect2:errIC}\\
&\tilde{u}_{\varsigma}(0,t)=c_0 \tilde{u}_{t}(0,t),\label{sect2:errBC0} \\
&\tilde{u}_{\varsigma}(1,t)=q(t)=-k_1 \mathbfcal{M}\tilde{u}(1,t)-k_2 \mathbfcal{M}\tilde{u}_t(1,t).
\label{sect2:errBC}
\end{flalign}

The properties of the closed-loop agent's dynamics are investigated in Theorem \ref{lemma1:Vconv} by Lyapunov analysis considering the candidate functional
\begin{flalign}
V(t)=&E(t)+G_1(t)+G_2(t),  \label{VCA} \\
E(t)=& \frac{1}{2}\int_0^1 \tilde{u}_\varsigma(\zeta,t)^T \tilde{u}_\varsigma(\zeta,t)d\zeta +  \frac{1}{2}\int_0^1 \tilde{u}_t(\zeta,t)^T \tilde{u}_t(\zeta,t)d\zeta+ \frac{1}{2}k_1 \tilde{u}^T(1,t)\mathbfcal{M}\tilde{u}(1,t),\label{ECA} \\
G_1(t)=&   \frac{1}{2}\rho_1 k_2 \tilde{u}^T(1,t)\mathbfcal{M}\tilde{u}(1,t) + \rho_1 \int_0^1  \tilde{u}(1,t)^T \tilde{u}_t(\zeta,t)d\zeta,\label{G1CA} \\
G_2(t)=&  \rho_2 \int_0^1 (\zeta - 1) \tilde{u}_t(\zeta,t)^T \tilde{u}_\varsigma(\zeta,t)d\zeta, \label{G2CA}
\end{flalign}
with $\rho_1$, $\rho_2$,  positive constants to be defined. The quadratic functional
\begin{flalign}
  V_0(t) = \|\tilde{u}_\varsigma(\cdot,t)\|_{{L^n_2}}^2 + \|\tilde{u}_t(\cdot,t)\|_{{L_2^n}}^2 + \|\tilde{u}(1,t)\|_2^2,  \label{cond15CA}
\end{flalign}
relying on appropriate norms, and the constants
\begin{small}
\begin{equation}\label{tau1}
\tau_1 = \min{ \left\{   \left( \frac{ 1 - \rho_2 - \rho_1 }{2}\right)  , k_1\lambda_m(\mathbfcal{M}) + \rho_1 k_2 \lambda_m(\mathbfcal{M})-\rho_1  \right\}  },
\end{equation}
\begin{eqnarray}
\tau_2 &=  \max{ \left\{   \frac{1 + \rho_2 + \rho_1 }{2} , k_1\lambda_M(\mathbfcal{M}) + \rho_1, k_2 \lambda_M(\mathbfcal{M})+\rho_1  \right\}  }\label{tau} \\
\mu &= \min \left\{ \frac{1}{2}\rho_2 , \frac{1}{2}(\rho_2-\rho_1) ,  \rho_1 \left[k_1 \lambda_m(\mathbfcal{M})-\frac{1}{2}c_0\right]\right\},\label{mudeff}
\end{eqnarray}
\end{small}
will also be used throughout.  {Provided that
\begin{flalign}
k_1 > \frac{c_0}{2 \lambda_m(\mathbfcal{M})} , \;\;\; k_2 > 0,  \label{cond1lCA}
\end{flalign}
where the positive constant $c_0$ is the same as in the boundary conditions \eqref{sect1:freeend}, \eqref{sect1:leaderBC0},
and
\begin{small}
\begin{flalign}
0 &< \rho_1 < \min \left( k_1\lambda_m(\mathbfcal{M}), 2 k_2 \lambda_m(\mathbfcal{M}), 1-\rho_2, \rho_2, 2-\frac{c_0}{\rho_2(1+c_0^2)} \right) \label{ro1teo1}\\
0 &< \rho_2 <  \min \left( 1, \frac{2 c_0}{1+c_0^2}\right), \label{ro2teo1}
\end{flalign}
\end{small}
the next result is in force}.

\vspace{.1cm}

\begin{mythm}\label{lemma1:Vconv}
Consider the followers' and leader's dynamics \eqref{sect1:barDynamic}-\eqref{sect1:actuation} and \eqref{sect1:leader}-\eqref{sect1:leaderBC1}  along with the local interaction rule \eqref{sect1:consensusProthhocol0}, \eqref{cond1lCA}. {Let  Assumptions \ref{assumption:graph} and \ref{assumption:1_distrurbance}
hold true. Then, the over-all closed-loop system globally possesses a unique classical solution} and the point-wise consensus condition \eqref{sect1:consensus_tracking} {is satisfied with $\delta=   \frac{\left(1+\sqrt{2}\right)}{\tau_1} V(0) $ and $\alpha=\frac{\mu}{\tau_2}$, where constant $V(0)$ is computed by \eqref{VCA}-\eqref{G2CA} whereas $\mu,\tau_1$ and $\tau_2$ are given in \eqref{tau1}-\eqref{tau} with arbitrary coefficients $\rho_1$ and $\rho_2$} {subject to \eqref{ro1teo1}-\eqref{ro2teo1}.}
\end{mythm}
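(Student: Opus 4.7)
The plan is to carry out a standard Lyapunov analysis on the error system \eqref{sect2:errDyn}--\eqref{sect2:errBC} and convert the resulting exponential decay of the functional $V(t)$ into the pointwise estimate \eqref{sect1:consensus_tracking}. Well-posedness, namely the existence of a unique classical solution of the overall closed-loop system in $\mathrm{H}^{1,n+1}\times L_2^{n+1}$, is handled first by invoking standard $C_0$-semigroup arguments for linear boundary-controlled wave equations (see \cite{curtain1995introduction}); Assumption~\ref{assumption:1_distrurbance} provides exactly the regularity and compatibility required at $\varsigma=0,1$ to obtain a classical solution.

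Next, I would establish the two-sided equivalence $\tau_1 V_0(t)\le V(t)\le \tau_2 V_0(t)$. The crossed terms in $G_1(t)$ and $G_2(t)$ are dominated through the Young-type inequality \eqref{app1:HolderInequality}: the $\rho_1\int_0^1 \tilde u(1,t)^T\tilde u_t(\zeta,t)\,d\zeta$ term is bounded by $\tfrac{\rho_1}{2}(\|\tilde u(1,t)\|_2^2+\|\tilde u_t(\cdot,t)\|_{L_2^n}^2)$, while the term in $G_2(t)$ is bounded, using $|\zeta-1|\le 1$ on $(0,1)$, by $\tfrac{\rho_2}{2}(\|\tilde u_\varsigma(\cdot,t)\|_{L_2^n}^2+\|\tilde u_t(\cdot,t)\|_{L_2^n}^2)$. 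The positivity of $\mathbfcal{M}$ under Assumption~\ref{assumption:graph} and the tuning window \eqref{ro1teo1}--\eqref{ro2teo1} then deliver the constants $\tau_1,\tau_2$ in \eqref{tau1}--\eqref{tau}.

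The central computation is the time derivative of $V$. Differentiating $E(t)$, substituting $\tilde u_{tt}=\tilde u_{\varsigma\varsigma}$, integrating by parts in $\varsigma$ and inserting the boundary conditions \eqref{sect2:errBC0}--\eqref{sect2:errBC} produces $\dot E(t)=-k_2\tilde u_t(1,t)^T\mathbfcal{M}\tilde u_t(1,t)-c_0\|\tilde u_t(0,t)\|_2^2$. For $G_1(t)$, integration by parts converts $\int_0^1\tilde u_{\varsigma\varsigma}\,d\zeta$ into $\tilde u_\varsigma(1,t)-\tilde u_\varsigma(0,t)$, so that after cancellation of the $\rho_1 k_2$ cross term one is left with $-\rho_1 k_1\tilde u(1,t)^T\mathbfcal{M}\tilde u(1,t)-\rho_1 c_0\tilde u(1,t)^T\tilde u_t(0,t)+\rho_1\tilde u_t(1,t)^T\!\int_0^1\tilde u_t(\zeta,t)\,d\zeta$. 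For $G_2(t)$ the classical multiplier trick $(\zeta-1)\partial_\varsigma[\|\tilde u_\varsigma\|^2+\|\tilde u_t\|^2]/2$ yields, after integration by parts, a $-\tfrac{\rho_2}{2}(\|\tilde u_\varsigma(\cdot,t)\|_{L_2^n}^2+\|\tilde u_t(\cdot,t)\|_{L_2^n}^2)$ dissipation together with a boundary contribution at $\varsigma=0$ that equals $\tfrac{\rho_2}{2}(1+c_0^2)\|\tilde u_t(0,t)\|_2^2$ after using \eqref{sect2:errBC0}. Summing the three derivatives and handling each remaining cross term by \eqref{app1:HolderInequality} with a carefully chosen $\xi$ (so that the $\|\tilde u_t(0,t)\|_2^2$ contributions are absorbed by $c_0\|\tilde u_t(0,t)\|_2^2$, and $\tilde u(1,t)$ contributions by $\rho_1 k_1\lambda_m(\mathbfcal{M})\|\tilde u(1,t)\|_2^2$) gives $\dot V(t)\le -\mu V_0(t)$ with $\mu$ as in \eqref{mudeff}, provided exactly that \eqref{cond1lCA}, \eqref{ro1teo1}, \eqref{ro2teo1} are in force. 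Identifying the correct window of $\xi$'s compatible with all three sign constraints simultaneously is the main technical obstacle, and it is precisely what forces the upper bound $2-c_0/[\rho_2(1+c_0^2)]$ appearing in \eqref{ro1teo1}.

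Combining $\dot V(t)\le -\mu V_0(t)\le -(\mu/\tau_2)V(t)$ yields the Gr\"onwall estimate $V(t)\le V(0)e^{-(\mu/\tau_2)t}$. To pass from this norm estimate to the pointwise bound \eqref{sect1:consensus_tracking}, I use the one-dimensional Sobolev trace $\tilde u_i(\varsigma,t)=\tilde u_i(1,t)-\int_\varsigma^1\tilde u_{i,\varsigma}(\xi,t)\,d\xi$ together with \eqref{app1:HolderInequality} applied with $\xi=\sqrt{2}$, which gives $\max_{\varsigma\in[0,1]}|\tilde u_i(\varsigma,t)|^2\le(1+\sqrt{2})(|\tilde u_i(1,t)|^2+\|\tilde u_{i,\varsigma}(\cdot,t)\|_{L_2}^2)\le(1+\sqrt{2})V_0(t)$. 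Chaining with $V_0(t)\le V(t)/\tau_1$ produces the claimed constants $\delta=(1+\sqrt{2})V(0)/\tau_1$ and $\alpha=\mu/\tau_2$, completing the argument.
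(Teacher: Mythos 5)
Your proposal follows essentially the same route as the paper's proof: the same decomposition $V=E+G_1+G_2$, the same Young-type bounds giving $\tau_1 V_0\le V\le \tau_2 V_0$, the same integration-by-parts computations of $\dot E$, $\dot G_1$, $\dot G_2$ leading to $\dot V\le-\mu V_0\le-(\mu/\tau_2)V$, and the same Gr\"onwall step. The only divergence is the final passage to the pointwise bound, where you use the trace identity $\tilde u_i(\varsigma,t)=\tilde u_i(1,t)-\int_\varsigma^1\tilde u_{i,\varsigma}\,d\xi$ with \eqref{app1:HolderInequality}, whereas the paper combines the Poincar\'e inequality of Lemma~\ref{lemma1:pisanoResult} with Agmon's inequality; both arguments produce the identical constant $\left(1+\sqrt{2}\right)\rho_0$, so this is an equivalent variant rather than a gap.
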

\begin{proof} {First let us note that there locally exists a unique classical solution of the closed-loop system \eqref{sect1:barDynamic}-\eqref{sect1:actuation},  \eqref{sect1:leader}-\eqref{sect1:leaderBC1}, \eqref{sect1:consensusProthhocol0}, \eqref{cond1lCA}. To reproduce this conclusion it suffices to follow the same line of reasoning used the proof of \cite[Theorem 1]{orlov2020}. } {Next let} us derive under which conditions on the  $\rho_1$, $\rho_2$ constants functional \eqref{VCA}-\eqref{G2CA} is positive definite. By means of \eqref{app1:HolderInequality} one derives the following estimations
 \begin{flalign}
\left|\rho_1 \int_0^1  \tilde{u}(1,t)^T \tilde{u}_t(\zeta,t)d\zeta \right|\leq \frac{1}{2}\rho_1 \| \tilde{u}(1,t)\|_2^2 + \frac{1}{2}\rho_1\|\tilde{u}_t(\cdot,t)\|_{{L_2^n}}^2    \label{VdotCbnd1A}
\end{flalign}
\begin{equation}\label{BndG1}
|G_2(t)| \leq  \frac{1}{2}\rho_2 \|\tilde{u}_t(\cdot,t)\|_{{L_2^n}}^2 + \frac{1}{2}\rho_2  \|\tilde{u}_\varsigma(\cdot,t)\|_{{L_2^n}}^2.
\end{equation}
By substituting \eqref{VdotCbnd1A}-\eqref{BndG1} into  \eqref{VCA}-\eqref{G2CA} this yields that
\begin{flalign}
V(t) \geq &  \frac{1}{2} \left( 1 - \rho_2 \right) \|\tilde{u}_\varsigma(\cdot,t)\|_{{L_2^n}}^2 + \frac{1}{2} \left(  1 - \rho_2 - \rho_1 \right) \|\tilde{u}_t(\cdot,t)\|_{{L_2^n}}^2 + \frac{1}{2} \left(  k_1\lambda_m(\mathbfcal{M})+ \rho_1 k_2 \lambda_m(\mathbfcal{M}) -\rho_1 \right) \|\tilde{u}(1,t)\|_{2}^2. \label{BndG2}
\end{flalign}
Provided that the next inequalities hold
\begin{flalign}
0 &< \rho_1 < \min \left( k_1\lambda_m(\mathbfcal{M}), 1-\rho_2 \right),  \;\;\;\; 0 < \rho_2 < 1, \label{cond1lCAZ9}
\end{flalign}
it is straightforwardly concluded by \eqref{BndG2} and \eqref{VCA}-\eqref{G2CA} that
\begin{flalign}
\tau_1 V_0(t) \leq V(t) \leq \tau_2 V_0(t),\label{cond14CA}
\end{flalign}
where $V_0(t)$ is defined in \eqref{cond15CA}, and  the positive constants $\tau_1$ and $\tau_2$ are defined in \eqref{tau1}-\eqref{tau}. 


Let us now compute the time derivative of $V(t)$ along the solutions of \eqref{sect2:errDyn}-\eqref{sect2:errBC}. Differentiating \eqref{ECA}, substituting \eqref{sect2:errDyn} and \eqref{sect2:errBC0}-\eqref{sect2:errBC} in the resulting expression, performing integration by parts and finally rearranging yield the chain of equalities
\begin{flalign}
\dot E(t)=& \int_0^1 \tilde{u}_\varsigma(\zeta,t)^T \tilde{u}_{\varsigma t}(\zeta,t)d\zeta + \int_0^1 \tilde{u}_t(\zeta,t)^T \tilde{u}_{t t}(\zeta,t)d\zeta  + k_1 \tilde{u}^T(1,t)\mathbfcal{M}\tilde{u}_t(1,t)   \nonumber \\ =&\int_0^1 \tilde{u}_\varsigma(\zeta,t)^T \tilde{u}_{\varsigma t}(\zeta,t)d\zeta + \int_0^1 \tilde{u}_t(\zeta,t)^T \tilde{u}_{\varsigma \varsigma}(\zeta,t)d\zeta + k_1 \tilde{u}^T(1,t)\mathbfcal{M}\tilde{u}_t(1,t)  \nonumber \\
=&\int_0^1 \tilde{u}_\varsigma(\zeta,t)^T \tilde{u}_{\varsigma t}(\zeta,t)d\zeta + \left. \tilde{u}_t(x,t)^T \tilde{u}_{\varsigma}(x,t) \right|_{x=0}^{x=1}  -\int_0^1 \tilde{u}_\varsigma(\zeta,t)^T \tilde{u}_{\varsigma t}(\zeta,t)d\zeta +  k_1  \tilde{u}_t^T(1,t)\mathbfcal{M}\tilde{u}(1,t) \nonumber \\
= & \; \tilde{u}_t(1,t)^T \tilde{u}_{\varsigma}(1,t)-\tilde{u}_t(0,t)^T \tilde{u}_{\varsigma}(0,t) +  k_1  \tilde{u}_t^T(1,t)\mathbfcal{M}\tilde{u}(1,t) \nonumber \\
=&- k_2\tilde{u}_t^T(1,t)\mathbfcal{M}\tilde{u}_t(1,t)- c_0 \|\tilde{u}_t(0,t) \|_2^2. \label{EdotCALONG}
\end{flalign}

Differentiating \eqref{G1CA} along  \eqref{sect2:errDyn}, \eqref{sect2:errBC0}-\eqref{sect2:errBC} one obtains
\begin{flalign}
\dot G_1(t) & =\rho_1 k_2 \tilde{u}_t(1,t)^T\mathbfcal{M}\tilde{u}(1,t)+ \rho_1 \int_0^1  \tilde{u}_t(1,t)^T \tilde{u}_t(\zeta,t)d\zeta +  \rho_1 \int_0^1  \tilde{u}(1,t)^T \tilde{u}_{tt}(\zeta,t)d\zeta \nonumber \\ & =\rho_1 k_2 \tilde{u}_t(1,t)^T\mathbfcal{M}\tilde{u}(1,t) + \rho_1 \int_0^1  \tilde{u}_t(1,t)^T \tilde{u}_t(\zeta,t)d\zeta +  \rho_1 \int_0^1  \tilde{u}(1,t)^T \tilde{u}_{\varsigma \varsigma}(\zeta,t)d\zeta. \label{G1dotinitCA}
\end{flalign}
Straightforward integration and the BCs \eqref{sect2:errBC0}-\eqref{sect2:errBC} yield
\begin{flalign}
 \rho_1 \int_0^1  \tilde{u}(1,t)^T \tilde{u}_{\varsigma \varsigma}(\zeta,t)d\zeta & =  \left. \rho_1  \tilde{u}(1,t)^T \tilde{u}_{\varsigma}(x,t) \right|_{x=0}^{x=1}\nonumber \\ & =\rho_1 \tilde{u}(1,t)^T \tilde{u}_{\varsigma}(1,t)   - \rho_1 \tilde{u}(1,t)^T \tilde{u}_{\varsigma }(0,t) \nonumber \\& = - \rho_1 k_1 \tilde{u}(1,t)^T \mathbfcal{M} \tilde{u}_{}(1,t)  -\rho_1 k_2 \tilde{u}(1,t)^T \mathbfcal{M} \tilde{u}_{t}(1,t) - \rho_1 c_0 \tilde{u}(1,t)^T \tilde{u}_{t}(0,t). \label{G1dotinitCAH}
\end{flalign}
Substituting \eqref{G1dotinitCAH} into \eqref{G1dotinitCA} yields
\begin{flalign}
\dot G_1(t)  = & \rho_1 \int_0^1  \tilde{u}_t(1,t)^T \tilde{u}_t(\zeta,t)d\zeta - \rho_1 k_1 \tilde{u}(1,t)^T \mathbfcal{M} \tilde{u}_{}(1,t) -\rho_1 c_0 \tilde{u}(1,t)^T \tilde{u}_{t}(0,t). \label{G1dotinDFitCA}
\end{flalign}

Differentiating \eqref{G2CA}  yields
\begin{flalign}
\dot G_2(t) =&  
\rho_2  \int_0^1 (\zeta-1) \tilde{u}_{\varsigma \varsigma}(\zeta,t)^T \tilde{u}_\varsigma(\zeta,t)d\zeta  +  \rho_2  \int_0^1 (\zeta-1) \tilde{u}_{t}(\zeta,t)^T \tilde{u}_{\varsigma t}(\zeta,t)d\zeta.   \label{G2dotinitCApreli}
\end{flalign}
Integrating by parts and substituting \eqref{sect2:errDyn}, \eqref{sect2:errBC0}-\eqref{sect2:errBC} one derives
\begin{flalign}
\dot G_2(t) =&\left. \frac{\rho_2}{2}(x-1)\tilde{u}_\varsigma(x,t)^T\tilde{u}_\varsigma(x,t) \right|_{x=0}^{x=1}- \frac{1}{2} \rho_2 \|\tilde{u}_\varsigma(\cdot,t)\|_{{L_2^n}}^2  \nonumber \\
 & +  \left. \frac{\rho_2}{2}(x-1)\tilde{u}_t(x,t)^T\tilde{u}_t(x,t) \right|_{x=0}^{x=1}- \frac{1}{2} \rho_2 \|\tilde{u}_t(\cdot,t)\|_{{L_2^n}}^2 \nonumber \\
  =& \; \frac{\rho_2}{2}\tilde{u}_\varsigma(0,t)^T\tilde{u}_\varsigma(0,t) - \frac{1}{2} \rho_2 \|\tilde{u}_\varsigma(\cdot,t)\|_{{L_2^n}}^2  +   \frac{\rho_2}{2} \tilde{u}_t(0,t)^T\tilde{u}_t(0 ,t)  - \frac{1}{2} \rho_2 \|\tilde{u}_t(\cdot,t)\|_{{L_2^n}}^2 \nonumber \\
 =& -\frac{1}{2} \rho_2 \ \|\tilde{u}_\varsigma(\cdot,t)\|_{{L_2^n}}^2 -\frac{1}{2} \rho_2 \ \|\tilde{u}_t(\cdot,t)\|_{{L_2^n}}^2  + \frac{1}{2}\rho_2 (1+c_0^2) \tilde{u}_t(0,t)^T\tilde{u}_t(0,t). \label{G2dotinitCA}
\end{flalign}

Differentiating \eqref{VCA} and combining \eqref{EdotCALONG}, \eqref{G1dotinDFitCA} and \eqref{G2dotinitCA} one obtains
\begin{flalign}
\dot V(t) =&  \dot E(t)+\dot G_1(t)+\dot G_2(t) \nonumber \\ =&  - k_2\tilde{u}_t^T(1,t)\mathbfcal{M}\tilde{u}_t(1,t)- c_0 \|\tilde{u}_t(0,t) \|_2^2  + \rho_1 \int_0^1  \tilde{u}_t(1,t)^T \tilde{u}_t(\zeta,t)d\zeta - \rho_1 k_1 \tilde{u}(1,t)^T \mathbfcal{M} \tilde{u}_{}(1,t)\nonumber \\ &-\rho_1 c_0 \tilde{u}(1,t)^T \tilde{u}_{t}(0,t)-\frac{1}{2} \rho_2 \ \|\tilde{u}_\varsigma(\cdot,t)\|_{{L_2^n}}^2  -\frac{1}{2} \rho_2 \ \|\tilde{u}_t(\cdot,t)\|_{{L_2^n}}^2 + \frac{1}{2}\rho_2 (1+c_0^2) \| \tilde{u}_t(0,t)\|_2^2.    \label{VdotCA}
\end{flalign}
Let us estimate the sign-indefinite terms  in the right hand side of \eqref{VdotCA}. By means of \eqref{app1:HolderInequality} one derives the next two estimations
 \begin{flalign}
\left|\rho_1 \int_0^1  \tilde{u}_t(1,t)^T \tilde{u}_t(\zeta,t)d\zeta \right|  &\leq \frac{1}{2}\rho_1 \| \tilde{u}_t(1,t)\|_2^2 + \frac{1}{2}\rho_1\|\tilde{u}_t(\cdot,t)\|_{{L_2^n}}^2    \label{VdotCbnd1AOYT} \\ \left|\rho_1 c_0 \tilde{u}(1,t)^T \tilde{u}_{t}(0,t) \right|&\leq \frac{1}{2}\rho_1 c_0 \| \tilde{u}(1,t)\|_2^2 + \frac{1}{2}\rho_1c_0\|\tilde{u}_t(0,t)\|^2.     \label{VdotCbnd1B}
\end{flalign}
Considering \eqref{VdotCbnd1AOYT} and \eqref{VdotCbnd1B} into \eqref{VdotCA},  estimating the positive-definite quadratic forms $\tilde{u}_t^T(1,t)\mathbfcal{M}\tilde{u}_t(1,t)$ and $\tilde{u}^T(1,t)\mathbfcal{M}\tilde{u}(1,t)$
in the right-hand side of \eqref{VdotCA}, and rearranging, one obtains
{\begin{flalign}
\dot V(t) \leq & - \frac{1}{2}\rho_2 \|\tilde{u}_\varsigma(\cdot,t)\|_{{L_2^n}}^2 - \frac{1}{2} \left( \rho_2-\rho_1\right) \|\tilde{u}_t(\cdot,t)\|_{{L_2^n}}^2 - \rho_1 \left[k_1 \lambda_m(\mathbfcal{M})-\frac{1}{2}c_0\right] \| \tilde{u}(1,t)\|_2^2  \nonumber \\ &- \left[k_2 \lambda_m(\mathbfcal{M})-\frac{1}{2}\rho_1\right] \| \tilde{u}_t(1,t)\|_2^2  - \left[c_0-\frac{1}{2}\rho_1c_0-\frac{1}{2}\rho_2 (1+c_0^2)\right] \| \tilde{u}_t(0,t)\|_2^2.  \label{Vdot3CA}
\end{flalign}}
Provided that the arbitrary coefficients $\rho_1$ and $\rho_2$ meet the inequalities \eqref{ro1teo1}-\eqref{ro2teo1}, it is concluded by \eqref{Vdot3CA} and \eqref{cond14CA} that
\begin{equation}\label{vdot}
    \dot V(t) \leq - \mu V_0(t) \leq  - \frac{\mu}{\tau_2} V(t),
\end{equation}
where $\mu$ is defined in \eqref{mudeff},
which implies that $V(t)$ escapes exponentially to zero {as fast as $V(t) \leq V(0)e^{-\frac{\mu}{\tau_2} t}$. As in the proof of \cite[Theorem 1]{orlov2020}, it follows that an arbitrary error solution in question remains uniformly bounded in $H^{1,n}\times L_2^n$, and hence they are globally extendible to the right for all $t>0$.

Furthermore, by \eqref{cond14CA} one derives the estimation
\begin{equation}
 V_0(t) \leq  \frac{1}{\tau_1} V(t) \leq  \rho_0 e^{-\alpha t}, \;\;\;\rho_0= \frac{1}{\tau_1} V(0), \;\;\;\; \alpha=\frac{\mu}{\tau_2}
\end{equation}
of $V_0(t)$.}  From the definition \eqref{cond15CA} of $V_0(t)$ one concludes that the squared norms $\|\tilde{u}(1,t) \|^2_2$ and $\| \tilde{u}_\varsigma(\cdot,t)\|^2_{{L_n}}$ are both upper-estimated by $V_0(t)$.  Inequality \eqref{app1:pisanoResult}, specialized with $b(\cdot)=\tilde{u}(\cdot,t)$ and $i=1$, reads as
{\begin{flalign}
 \| \tilde{u}(\cdot,t)\|_{{L_2^n}}^2 \leq 2 \left( \|  \tilde{u}(1,t) \|_2^2 +  \| \tilde{u}_\varsigma(\cdot,t)\|_{{L_2^n}}^2 \right).  \label{Vdot3CAHH}
\end{flalign}}

{Definition \eqref{cond15CA} also implies that $$\|  \tilde{u}(1,t) \|_2^2 +  \| \tilde{u}_\varsigma(\cdot,t)\|_{{L_2^n}}^2  \leq V_0(t) \leq \rho_0 e^{-\alpha t}$$. Substituting this last estimation into \eqref{Vdot3CAHH} one obtains that
{\begin{flalign}
 \| \tilde{u}(\cdot,t)\|_{{L_2^n}}^2 \leq 2  \rho_0 e^{-\alpha t}.\label{ehiy}
\end{flalign}}
Agmon's inequality yields the next uniform estimation
\begin{equation}\label{Agm}
    \max_{x\in[0,1]}|\tilde u_i(x,t)|^2\leq \tilde u^2_i(1,t) +  \| \tilde{u}_i(\cdot,t)\|_{{L_2}} \| \tilde{u}_{i,\varsigma}(\cdot,t)\|_{{L_2}}
\end{equation}
for $|\tilde u_i(x,t)|$. The terms appearing in the right-hand side of \eqref{Agm} are estimated as
\begin{eqnarray}
  \tilde u^2_i(1,t) &\leq& \|\tilde{u}(1,t) \|^2_2\leq  \rho_0 e^{-\alpha t},  \label{b1}\\
  \| \tilde{u}_i(\cdot,t)\|_{{L_2}} &\leq&   \| \tilde{u}(\cdot,t)\|_{{L_2^n}}  \leq \sqrt{2 \rho_0 }e^{-\frac{\alpha}{2} t}, \label{b2}\\
\| \tilde{u}_{i,\varsigma}(\cdot,t)\|_{{L_2}}&\leq&  \| \tilde{u}_\varsigma(\cdot,t)\|_{{L_2^n}}  \leq \sqrt{\rho_0}e^{-\frac{\alpha}{2}t}.\label{b3}
\end{eqnarray}
Substituting \eqref{b1}-\eqref{b3} into \eqref{Agm} yields
\begin{equation}\label{Agm2}
    \max_{x\in[0,1]}|\tilde u_i(x,t)|^2\leq  \left(1+\sqrt{2}\right)\rho_0 e^{-\alpha t}
\end{equation}
which, due to definition \eqref{sect2:u_tilde} of $\tilde u_i(x,t)$, results in the point-wise consensus relation \eqref{sect1:consensus_tracking} with the parameters $\delta= \left(1+\sqrt{2}\right)\rho_0=  \frac{\left(1+\sqrt{2}\right)}{\tau_1} V(0) $ and $\alpha=\frac{\mu}{\tau_2}$.}
Since the Lyapunov functional $V(t)$ is radially unbounded, the demonstrated exponential
stability holds globally for the closed-loop system in question. Theorem \ref{lemma1:Vconv} is proven.
\end{proof}

\section{Disturbance propagation ISS analysis}\label{sectDIST}

In the sequel, the perturbed version
\begin{flalign}
u_{tt}(\varsigma,t)&=u_{\varsigma\varsigma}(\varsigma,t) + f(\varsigma,t),
\label{sect1:barDynamicPERT}\\
u(\varsigma,0)&= u^0(\varsigma), \;\;\; u_t(\varsigma,0)= u_t^0(\varsigma),
\label{sect1:barICsPERT}\\
u_\varsigma(0,t)&=c_0 u_t(0,t)+\psi_0(t), \label{sect1:freeendPERT}\\
u_\varsigma(1,t)&=q(t)+\psi_1(t)\label{sect1:actuationPERT}
\end{flalign}
of the followers' dynamics \eqref{sect1:barDynamic}-\eqref{sect1:actuation} is considered, where the PDE  \eqref{sect1:barDynamicPERT} is corrupted by an in-domain disturbance $f(\varsigma,t)$ {of class $L^{loc}_\infty(L_2)$, and the BCs \eqref{sect1:freeendPERT}-\eqref{sect1:actuationPERT}  are corrupted by boundary perturbation terms $\psi_0(t)$ and $\psi_1(t)$ of class $C^2$.
Since in the perturbed case  the compatibility condition \eqref{sect1:compa} would necessarily be modified to involve the boundary perturbations, and therefore it would  be rather restrictive, Assumption \ref{assumption:1_distrurbance} is no longer in force. Instead, the meaning of the perturbed boundary-value problem \eqref{sect1:barDynamicPERT} --\eqref{sect1:actuationPERT} is subsequently viewed in the weak sense as it is done in \cite{pisano2017} for a diffusion PDE.}


{The same local interaction control rule \eqref{sect1:consensusProthhocol0} proves to be  capable of ensuring the exponential ISS inequality, relating to suitable norms of the tracking error  \eqref{sect2:u_tilde} and admissible perturbations. In the sequel, let the arbitrary parameters $\rho_1,\rho_2$ in \eqref{VCA}-\eqref{G2CA} be such that}
\begin{flalign}
0 &< \rho_1 < \min \left( k_1\lambda_m(\mathbfcal{M}), 1-\rho_2, { \rho_2-\xi_1},  {2 k_2 \lambda_m(\mathbfcal{M})-1}, \frac{1}{c_0} \left[ 2 \left(c_0-\frac{1}{2}\xi_2\right)-\rho_2(1+c_0+c_0^2)\right]\right), \label{cond1lCAPRTB} \\
{ \xi_1} &< \rho_2 <  \min \left( 1, { \frac{2 \left(c_0-\frac{1}{2}\xi_2\right)}{1+c_0+c_0^2}}\right). \label{cond1lCA2PRTBB}
\end{flalign}
{for some  $\xi_1>0$ and  $0 < \xi_2 < \frac{1}{2c_0}$ where  the positive constant $c_0$ is the same as in the boundary conditions \eqref{sect1:freeendPERT}, \eqref{sect1:leaderBC0}.
Letting
\begin{flalign}
k_1 > \frac{c_0{+3}}{2 \lambda_m(\mathbfcal{M})} , \;\;\; k_2 > {\frac{1}{2\lambda_m(\mathbfcal{M})}} \label{cond1lCAwow}
\end{flalign}
and setting
\begin{equation}\label{vdotPERTpar}
q_0=\frac{1}{2}\left[\frac{1}{\xi_2} + \rho_1 + \rho_2(c_0+1) \right], \; q_f=\left(\frac{1}{2\xi_1} +\frac{1}{2}\rho_1 + \rho_2 \right),
\end{equation}
\begin{small}\begin{equation}\label{mu}
\mu_2 = \min \left\{ \frac{1}{4}\rho_2 ,\frac{1}{2}\left(\rho_2-\rho_1-\xi_1\right),\rho_1 \left[k_1 \lambda_m(\mathbfcal{M})-\frac{1}{2}c_0 { -\frac{3}{2} } \right]\right\}.
\end{equation}
\end{small}}
{the next result is in order.}

\begin{mythm}\label{lemma1:VconvPERT}
{Consider the  leader and perturbed follower PDEs \eqref{sect1:leader}-\eqref{sect1:leaderBC1} and \eqref{sect1:barDynamicPERT}-\eqref{sect1:actuationPERT}, initialized in $\mathrm{H}^{1,n+1}\times L_2^{n+1}$ and governed by  the local interaction rule \eqref{sect1:consensusProthhocol0}, \eqref{cond1lCAwow}. Let Assumption \ref{assumption:graph} be in force and let $f(\varsigma,t)$ be of class $L^{loc}_\infty(L_2)$ whereas $\psi_0(t)$ and $\psi_1(t)$ be of class $C^2$.
Then, the over-all closed-loop system globally possesses a unique weak solution and the following exponential ISS relation
\begin{flalign}
     V_0(t) \leq &   V_0(0) e^{-\frac{\mu_2}{\tau_2}t}  + \frac{\tau_2 q_0}{\mu_2 \tau_1}  E_s\left(\|\psi_0(t)\|_2^2\right) + \frac{\tau_2}{\mu_2\tau_1} E_s\left(\|\psi_1(t)\|_2^2\right)+  \frac{\tau_2 q_f}{\mu_2\tau_1} E_s\left(\|f(\cdot,t)\|_{L_2^n}^2\right)\label{ISS23}
\end{flalign}
holds true with  $V_0(0)$, derived from \eqref{cond15CA}, with  $\tau_1$ and $\tau_2$, given by  \eqref{tau1}, \eqref{tau}, with constants $\rho_1$ and $\rho_2$,  fulfilling \eqref{cond1lCAPRTB},\eqref{cond1lCA2PRTBB}, and with $q_0$, $q_f$, $\mu_2$, specified by \eqref{vdotPERTpar}, \eqref{mu}.}

\end{mythm}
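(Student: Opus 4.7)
The plan is to mirror the Lyapunov argument developed in the proof of Theorem \ref{lemma1:Vconv}, using the same candidate functional $V(t)=E(t)+G_1(t)+G_2(t)$ defined in \eqref{VCA}--\eqref{G2CA}, and tracking the extra terms produced by the in-domain perturbation $f(\varsigma,t)$ and the boundary perturbations $\psi_0(t),\psi_1(t)$. Well-posedness in the weak sense is not new: I would first invoke the same construction used in \cite{pisano2017} for the perturbed diffusion PDE, adapted to the hyperbolic setting, to secure a unique global weak solution in $H^{1,n+1}\times L_2^{n+1}$; the regularity $L^{loc}_\infty(L_2)$ of $f$ and $C^2$-regularity of $\psi_0,\psi_1$ are exactly what is needed to make the formal energy computation below rigorous through a standard density/approximation argument.

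Next I would re-examine the lower/upper quadratic bounds \eqref{cond14CA}, namely $\tau_1 V_0(t)\leq V(t)\leq \tau_2 V_0(t)$: since the functional $V(t)$ itself is unchanged, and inequalities \eqref{ro1teo1}--\eqref{ro2teo1} are implied by the stronger conditions \eqref{cond1lCAPRTB}--\eqref{cond1lCA2PRTBB}, the equivalence carries over verbatim. Then I would differentiate $V(t)$ along the perturbed dynamics \eqref{sect1:barDynamicPERT}--\eqref{sect1:actuationPERT}. Repeating the integration-by-parts steps of \eqref{EdotCALONG}, \eqref{G1dotinDFitCA}, \eqref{G2dotinitCA} but now with the modified BCs $\tilde u_\varsigma(0,t)=c_0\tilde u_t(0,t)+\psi_0(t)$ and $\tilde u_\varsigma(1,t)=q(t)+\psi_1(t)$, and with the extra source term $\tilde u_{tt}=\tilde u_{\varsigma\varsigma}+f$, produces the same ``clean'' negative quadratic form already present in \eqref{Vdot3CA} plus a finite collection of cross terms of the form $\tilde u_t(0,t)^T\psi_0(t)$, $\tilde u_t(1,t)^T\psi_1(t)$, $\tilde u(1,t)^T\psi_0(t)$, $\tilde u(1,t)^T\psi_1(t)$, $\int_0^1\tilde u_t^T f\,d\zeta$, and similar terms coming from the perturbation of $\tilde u_\varsigma(0,t)^2$ in $\dot G_2$.

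The central step is then to dominate all these cross terms via Young's inequality \eqref{app1:HolderInequality} with free parameters $\xi_1,\xi_2>0$. The parameter $\xi_2$ is used on those terms that interact with $\tilde u_t(0,t)$ (which appear with coefficient $c_0$ from the perturbed boundary \eqref{sect1:freeendPERT}), and $\xi_1$ is used on the $\int_0^1\tilde u_t^T f\,d\zeta$ term (and kin). The upper bounds \eqref{cond1lCAwow} on $k_1,k_2$ (now tightened relative to \eqref{cond1lCA}) together with the strengthened windows \eqref{cond1lCAPRTB}--\eqref{cond1lCA2PRTBB} for $\rho_1,\rho_2$ are precisely what is required so that the ``absorbed'' portions of the cross terms leave the coefficients multiplying $\|\tilde u_\varsigma\|_{L_2^n}^2$, $\|\tilde u_t\|_{L_2^n}^2$, $\|\tilde u(1,t)\|_2^2$, $\|\tilde u_t(1,t)\|_2^2$, $\|\tilde u_t(0,t)\|_2^2$ strictly positive, with the minimum of the first three given by $\mu_2$ in \eqref{mu}. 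The residual disturbance contributions gather into $q_0\|\psi_0(t)\|_2^2+\|\psi_1(t)\|_2^2+q_f\|f(\cdot,t)\|_{L_2^n}^2$ with $q_0,q_f$ as in \eqref{vdotPERTpar}. Combined with $V_0(t)\geq V(t)/\tau_2$, this yields the scalar differential inequality
\begin{equation*}
\dot V(t)\leq -\tfrac{\mu_2}{\tau_2}V(t)+q_0\|\psi_0(t)\|_2^2+\|\psi_1(t)\|_2^2+q_f\|f(\cdot,t)\|_{L_2^n}^2.
\end{equation*}

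Finally, I would apply the comparison lemma (Grönwall) to this ODI, replacing the time-varying perturbation norms by their essential suprema $E_s(\cdot)$ over $[0,t]$, and then use $V_0(t)\leq V(t)/\tau_1$ to convert the bound on $V(t)$ into the bound on $V_0(t)$ asserted in \eqref{ISS23}. The main obstacle I expect is purely bookkeeping: verifying that the specific choices \eqref{cond1lCAwow}, \eqref{cond1lCAPRTB}--\eqref{cond1lCA2PRTBB} actually render every coefficient in the cross-term balance nonnegative and the three ``interior'' coefficients uniformly bounded below by $\mu_2$ — the admissibility window for $(\rho_1,\rho_2,\xi_1,\xi_2)$ is narrow and each Young parameter must be tuned so that no single absorbed quantity exceeds the corresponding clean negative term.
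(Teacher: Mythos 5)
Your proposal is correct and follows essentially the same route as the paper's own proof: the identical functional $V=E+G_1+G_2$, the carried-over equivalence $\tau_1 V_0\le V\le\tau_2 V_0$, the re-derivation of $\dot E$, $\dot G_1$, $\dot G_2$ along the perturbed dynamics with Young-type absorption of the cross terms via $\xi_1,\xi_2$, and the comparison lemma followed by division by $\tau_1$ to obtain \eqref{ISS23}. The only detail worth double-checking in the bookkeeping is that the $f$-dependent residue from $\int_0^1\tilde u_t^Tf\,d\zeta$ contributes $\tfrac{1}{2\xi_1}\|f\|_{L_2^n}^2$, consistent with the definition of $q_f$ in \eqref{vdotPERTpar}.
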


\vspace{.15cm}

\begin{proof} {By following the line of reasoning used in the proof of \cite[Theorem 1]{pisano2017}, the existence of a unique weak solution of the closed-loop system \eqref{sect1:leader}-\eqref{sect1:leaderBC1}, \eqref{sect1:consensusProthhocol0}, \eqref{sect1:barDynamicPERT}-\eqref{sect1:actuationPERT} is established.
The same Lyapunov functional \eqref{VCA}-\eqref{G2CA}, adopted to analyze the unperturbed dynamics, is now applied to the ISS analysis.} Provided that restrictions  \eqref{cond1lCAPRTB}-\eqref{cond1lCA2PRTBB} hold, the previously derived relations 
\begin{flalign}
\tau_1 V_0(t) \leq V(t) \leq \tau_2 V_0(t),\label{cond14CATEOPERT}
\end{flalign}
are still in force, where $V_0(t)$ is defined in \eqref{cond15CA}, and the positive constants $\tau_1$ and $\tau_2$ are defined in \eqref{tau1}-\eqref{tau}. {By \eqref{sect1:barDynamicPERT}-\eqref{sect1:actuationPERT} and \eqref{sect1:leader}-\eqref{sect1:leaderBC1}, coupled to the local interaction rule \eqref{sect1:consensusProthhocol0}, the boundary value problem, governing the deviation variable \eqref{sect2:u_tilde}, now reads as
\begin{flalign}
&\tilde{u}_{tt}(\varsigma,t)=\tilde{u}_{\varsigma\varsigma}(\varsigma,t)+f(\varsigma,t), \label{sect2:errDynPERT}\\
&\tilde{u}(\varsigma,0)={u}(\varsigma,0)-1_n u_0(\varsigma,0),\\
&\tilde{u}_t(\varsigma,0)={u}_{t}(\varsigma,0)-1_n u_{0,t}(\varsigma,0),\label{sect2:errICPERT}\\
&\tilde{u}_{\varsigma}(0,t)=c_0 \tilde{u}_{t}(0,t) + \psi_0(t),\label{sect2:errBC0PERT} \\
&\tilde{u}_{\varsigma}(1,t)=-k_1 \mathbfcal{M}\tilde{u}(1,t)-k_2 \mathbfcal{M}\tilde{u}_t(1,t)+\psi_1(t).
\label{sect2:errBCPERT}
\end{flalign}}
Let us now compute the time derivative of $V(t)$ along the weak solutions of \eqref{sect2:errDynPERT}-\eqref{sect2:errBCPERT}.
Differentiating \eqref{ECA} and substituting \eqref{sect2:errDynPERT}  in the resulting expression yields
\begin{flalign}
\dot E(t)=& \int_0^1 \tilde{u}_\varsigma(\zeta,t)^T \tilde{u}_{\varsigma t}(\zeta,t)d\zeta + \int_0^1 \tilde{u}_t(\zeta,t)^T \tilde{u}_{t t}(\zeta,t)d\zeta  + k_1 \tilde{u}^T(1,t)\mathbfcal{M}\tilde{u}_t(1,t)   \nonumber \\
=&\int_0^1 \tilde{u}_\varsigma(\zeta,t)^T \tilde{u}_{\varsigma t}(\zeta,t)d\zeta + \int_0^1 \tilde{u}_t(\zeta,t)^T \tilde{u}_{\varsigma \varsigma}(\zeta,t)d\zeta  + \int_0^1 \tilde{u}_t(\zeta,t)^T f(\zeta,t)d\zeta + k_1 \tilde{u}^T(1,t)\mathbfcal{M}\tilde{u}_t(1,t). \label{EdotCALONGpeti1}
\end{flalign}
Performing integration by parts, considering \eqref{sect2:errBC0PERT}-\eqref{sect2:errBCPERT} into the resulting relation, and finally rearranging, yield the chain of equalities

\begin{flalign}
\dot E(t)=&\int_0^1 \tilde{u}_\varsigma(\zeta,t)^T \tilde{u}_{\varsigma t}(\zeta,t)d\zeta + \left. \tilde{u}_t(x,t)^T \tilde{u}_{\varsigma}(x,t) \right|_{x=0}^{x=1} \nonumber \\
& -\int_0^1 \tilde{u}_\varsigma(\zeta,t)^T \tilde{u}_{\varsigma t}(\zeta,t)d\zeta +  k_1  \tilde{u}_t^T(1,t)\mathbfcal{M}\tilde{u}(1,t)  + \int_0^1 \tilde{u}_t(\zeta,t)^T f(\zeta,t)d\zeta  \nonumber \\
= & \; \tilde{u}_t(1,t)^T \tilde{u}_{\varsigma}(1,t)-\tilde{u}_t(0,t)^T \tilde{u}_{\varsigma}(0,t)  +  k_1  \tilde{u}_t^T(1,t)\mathbfcal{M}\tilde{u}(1,t) + \int_0^1 \tilde{u}_t(\zeta,t)^T f(\zeta,t)d\zeta  \nonumber \\
=&- k_2\tilde{u}_t^T(1,t)\mathbfcal{M}\tilde{u}_t(1,t)- c_0 \|\tilde{u}_t(0,t) \|_2^2 + \tilde{u}_t^T(1,t)\psi_1(t) - \tilde{u}_t^T(0,t)\psi_0(t) + \int_0^1 \tilde u_t(\zeta,t)f(\zeta,t)d\zeta,  \label{EdotCALONGpeti2}
\end{flalign}
Differentiating \eqref{G1CA} and and substituting \eqref{sect2:errDynPERT} in the resulting expression yields
\begin{flalign}
\dot G_1(t) & =\rho_1 k_2 \tilde{u}_t(1,t)^T\mathbfcal{M}\tilde{u}(1,t)+ \rho_1 \int_0^1  \tilde{u}_t(1,t)^T \tilde{u}_t(\zeta,t)d\zeta +  \rho_1 \int_0^1  \tilde{u}(1,t)^T \tilde{u}_{tt}(\zeta,t)d\zeta\nonumber \\ & =\rho_1 k_2 \tilde{u}_t(1,t)^T\mathbfcal{M}\tilde{u}(1,t) + \rho_1 \int_0^1  \tilde{u}_t(1,t)^T \tilde{u}_t(\zeta,t)d\zeta +  \rho_1 \int_0^1  \tilde{u}(1,t)^T \tilde{u}_{\varsigma \varsigma}(\zeta,t)d\zeta \nonumber \\
 & +  \rho_1 \int_0^1  \tilde{u}(1,t)^T f(\zeta,t)d\zeta. \label{G1dotinitCApt00}
\end{flalign}
By direct integration and considering \eqref{sect2:errBC0PERT}-\eqref{sect2:errBCPERT} it yields
\begin{flalign}
 \rho_1 \int_0^1  \tilde{u}(1,t)^T \tilde{u}_{\varsigma \varsigma}(\zeta,t)d\zeta =& \rho_1  \tilde{u}(1,t)^T \left. \tilde{u}_{\varsigma}(x,t) \right|_{x=0}^{x=1} \nonumber \\ =& \rho_1  \tilde{u}(1,t)^T \tilde{u}_{\varsigma}(1,t) -  \rho_1  \tilde{u}(1,t)^T \tilde{u}_{\varsigma}(0,t) \nonumber \\
 =& - \rho_1 k_1 \tilde{u}(1,t)^T \mathbfcal{M} \tilde{u}_{}(1,t)- \rho_1 k_2 \tilde{u}(1,t)^T \mathbfcal{M} \tilde{u}_{t}(1,t)+\rho_1 \tilde u^T(1,t)\psi_1(t)\nonumber \\ &-\rho_1 c_0 \tilde{u}(1,t)^T \tilde{u}_{t}(0,t) -\rho_1 \tilde u ^T(1,t)\psi_0(t).
 \label{G1dotinitCApt2}
\end{flalign}
Substituting \eqref{G1dotinitCApt2} into \eqref{G1dotinitCApt00} one obtains
 \begin{flalign}
\dot G_1(t) = & \rho_1 \int_0^1  \tilde{u}_t(1,t)^T \tilde{u}_t(\zeta,t)d\zeta - \rho_1 k_1 \tilde{u}(1,t)^T \mathbfcal{M} \tilde{u}_{}(1,t)-\rho_1 c_0 \tilde{u}(1,t)^T \tilde{u}_{t}(0,t) \nonumber \\ & + \rho_1 \int_0^1  \tilde{u}(1,t)^T f(\zeta,t)d\zeta  +\rho_1 \tilde u^T(1,t)\psi_1(t)-\rho_1 \tilde u ^T(1,t)\psi_0(t).  \label{G1dotinDFitC2222A}
\end{flalign}

Differentiating \eqref{G2CA} and substituting \eqref{sect2:errDynPERT} yields
\begin{flalign}
\dot G_2(t) =& \rho_2  \int_0^1 (\zeta-1) \tilde{u}_{tt}(\zeta,t)^T \tilde{u}_\varsigma(\zeta,t)d\zeta  +  \rho_2  \int_0^1 (\zeta-1) \tilde{u}_{t}(\zeta,t)^T \tilde{u}_{\varsigma t}(\zeta,t)d\zeta = \nonumber \\
=&\rho_2  \int_0^1 (\zeta-1) \tilde{u}_{\varsigma \varsigma}(\zeta,t)^T \tilde{u}_\varsigma(\zeta,t)d\zeta   + \rho_2  \int_0^1 (\zeta-1) f(\zeta,t)^T \tilde{u}_\varsigma(\zeta,t)d\zeta  +  \rho_2  \int_0^1 (\zeta-1) \tilde{u}_{t}(\zeta,t)^T \tilde{u}_{\varsigma t}(\zeta,t)d\zeta   \label{G2dotinitCApreliPERT2}
\end{flalign}
Integrating by parts one derives
\begin{flalign}
\dot G_2(t) =&\left. \frac{\rho_2}{2}(x-1)\tilde{u}_\varsigma(x,t)^T\tilde{u}_\varsigma(x,t) \right|_{x=0}^{x=1}- \frac{1}{2} \rho_2 \|\tilde{u}_\varsigma(\cdot,t)\|_{{L_2^n}}^2   + \rho_2  \int_0^1 (\zeta-1) f(\zeta,t)^T \tilde{u}_\varsigma(\zeta,t)d\zeta \nonumber \\
 & +  \left. \frac{\rho_2}{2}(x-1)\tilde{u}_t(x,t)^T\tilde{u}_t(x,t) \right|_{x=0}^{x=1}- \frac{1}{2} \rho_2 \|\tilde{u}_t(\cdot,t)\|_{{L_2^n}}^2 \nonumber \\
  =& \; \frac{\rho_2}{2}\tilde{u}_\varsigma(0,t)^T\tilde{u}_\varsigma(0,t) - \frac{1}{2} \rho_2 \|\tilde{u}_\varsigma(\cdot,t)\|_{{L_2^n}}^2   + \rho_2  \int_0^1 (\zeta-1) f(\zeta,t)^T \tilde{u}_\varsigma(\zeta,t)d\zeta  +   \frac{\rho_2}{2} \tilde{u}_t(0,t)^T\tilde{u}_t(0 ,t)  - \frac{1}{2} \rho_2 \|\tilde{u}_t(\cdot,t)\|_{{L_2^n}}^2. \label{G2dotinitCAPERTo1pre}
\end{flalign}

Substituting  \eqref{sect2:errBC0PERT} into \eqref{G2dotinitCAPERTo1pre} and rearranging one obtains
\begin{flalign}
\dot G_2(t) =& - \frac{1}{2} \rho_2 \|\tilde{u}_\varsigma(\cdot,t)\|_{{L_2^n}}^2- \frac{1}{2} \rho_2 \|\tilde{u}_t(\cdot,t)\|_{{L_2^n}}^2   +  \rho_2  \int_0^1 (\zeta-1) f(\zeta,t)^T \tilde{u}_\varsigma(\zeta,t)d\zeta \nonumber \\
 & +\frac{1}{2} \rho_2 (1+c_0^2) \|\tilde{u}_t(0,t) \|_2^2 + \rho_2 c_0 \psi_0^T(t)\tilde{u}_t(0,t)  +\frac{1}{2} \rho_2 \|\psi_0(t)\|_2^2.  \label{G2dotinitCAPERTo1}
\end{flalign}

By using relation \eqref{app1:HolderInequality}, let us estimate all perturbation-dependent and sign-indefinite terms in the right-hand sides of \eqref{EdotCALONGpeti2}, \eqref{G1dotinDFitC2222A},  and \eqref{G2dotinitCAPERTo1}.
\begin{flalign}
 |\tilde{u}_t^T(1,t)\psi_1(t)| \leq  \frac{1}{2}\|\tilde{u}_t(1,t)\|_2^2+\frac{1}{2}\|\psi_1(t)\|_2^2  \label{pertestNEW} \\
 |\tilde{u}_t^T(0,t)\psi_0(t)| \leq  \frac{\xi_2}{2}\|\tilde{u}_t(0,t)\|_2^2+\frac{1}{2\xi_2}\|\psi_0(t)\|_2^2, \;\; \xi_2>0  \label{pertestNEW2}
\end{flalign}
\begin{flalign}
 \left|\int_0^1 \tilde u_t(\zeta,t)f(\zeta,t)d\zeta\right| \leq  \frac{\xi_1}{2}\|\tilde{u}_t(\cdot,t)\|_{L_2^n}^2+\frac{1}{2\xi_1}\|f(\cdot,t)\|_{L_2^n}^2, \; \xi_1>0  \label{pertestNEW3}
\end{flalign}
\begin{flalign}
 \left|\rho_1 \int_0^1  \tilde{u}_t(1,t)^T \tilde{u}_t(\zeta,t)d\zeta\right| \leq  \frac{\rho_1}{2}\|\tilde{u}_t(1,t)\|_{2}^2+\frac{\rho_1}{2}\|\tilde{u}_t(\cdot,t)\|_{L_2^n}^2  \label{pertestNEW4}
\end{flalign}
\begin{flalign}
 \left| \rho_1 c_0 \tilde{u}(1,t)^T \tilde{u}_{t}(0,t)  \right| \leq  \frac{\rho_1c_0}{2}\|\tilde{u}(1,t)\|_{2}^2+\frac{\rho_1c_0}{2}\|\tilde{u}_t(0,t)\|_{2}^2  \label{pertestNEW5}
\end{flalign}
\begin{flalign}
 \left| \rho_1 \int_0^1  \tilde{u}(1,t)^T f(\zeta,t)d\zeta   \right| \leq  \frac{\rho_1}{2}\|\tilde{u}(1,t)\|_{2}^2+\frac{\rho_1}{2}\|f(\cdot,t)\|_{L_2^n}^2  \label{pertestNEW6}
\end{flalign}
\begin{flalign}
 |\rho_1\tilde{u}^T(1,t)\psi_1(t)| \leq  \frac{\rho_1}{2}\|\tilde{u}(1,t)\|_2^2+\frac{\rho_1}{2}\|\psi_1(t)\|_2^2  \label{pertestNEW7} \\
 |\rho_1\tilde{u}^T(1,t)\psi_0(t)| \leq  \frac{\rho_1}{2}\|\tilde{u}(1,t)\|_2^2+\frac{\rho_1}{2}\|\psi_0(t)\|_2^2  \label{pertestNEW8}
\end{flalign}
 \begin{flalign}
 &\left| \rho_2  \int_0^1 (\zeta-1) f(\zeta,t)^T \tilde{u}_\varsigma(\zeta,t)d\zeta  \right|  \leq \left| \rho_2  \int_0^1  f(\zeta,t)^T \tilde{u}_\varsigma(\zeta,t)d\zeta  \right|  \leq  \rho_2\|f(\cdot,t)\|_{L_2^n}^2+\frac{\rho_2}{4}\|\tilde{u}_\varsigma(\cdot,t)\|_{L_2^n}^2  \label{pertestNEW9}
\end{flalign}
\begin{flalign}
 |\rho_2 c_0 \psi_0^T(t)\tilde{u}_t(0,t)| \leq  \frac{\rho_2c_0}{2}\|\tilde{u}_t(0,t)\|_2^2+\frac{\rho_2c_0}{2}\|\psi_0(t)\|_2^2  \label{pertestNEW10}
\end{flalign}

 Substituting \eqref{pertestNEW}-\eqref{pertestNEW3} into \eqref{EdotCALONGpeti2}, estimating the quadratic form, and rearranging yields
\begin{flalign}
\dot E(t)\leq&- k_2  \lambda_m(\mathbfcal{M}) \| \tilde{u}_t(1,t) \|_2^2 - c_0 \|\tilde{u}_t(0,t) \|_2^2  + \frac{1}{2}\|\tilde{u}_t(1,t)\|_2^2+\frac{1}{2}\|\psi_1(t)\|_2^2 \nonumber \\
& + \frac{\xi_2}{2}\|\tilde{u}_t(0,t)\|_2^2+\frac{1}{2\xi_2}\|\psi_0(t)\|_2^2   + \frac{\xi_1}{2}\|\tilde{u}_t(\cdot,t)\|_{L_2^n}^2+\frac{1}{2\xi_1}\|f(\cdot,t)\|_{L_2^n}^2. \label{EdotCALONGpeti2SUBBND}
\end{flalign}

Substituting \eqref{pertestNEW4}-\eqref{pertestNEW8} into \eqref{G1dotinDFitC2222A}, estimating the quadratic form, and rearranging yields
  \begin{flalign}
\dot G_1(t) \leq &   \frac{\rho_1}{2}\|\tilde{u}_t(1,t)\|_{2}^2+\frac{\rho_1}{2}\|\tilde{u}_t(\cdot,t)\|_{L_2^n}^2 - \rho_1 k_1 \lambda_m(\mathbfcal{M}) \|\tilde{u}(1,t)\|_2^2   +\frac{\rho_1c_0}{2}\|\tilde{u}(1,t)\|_{2}^2+\frac{\rho_1c_0}{2}\|\tilde{u}_t(0,t)\|_{2}^2  \nonumber \\
&  +\frac{\rho_1}{2}\|\tilde{u}(1,t)\|_{2}^2+\frac{\rho_1}{2}\|f(\cdot,t)\|_{L_2^n}^2  + \frac{\rho_1}{2}\|\tilde{u}(1,t)\|_2^2+\frac{\rho_1}{2}\|\psi_1(t)\|_2^2   + \frac{\rho_1}{2}\|\tilde{u}(1,t)\|_2^2+\frac{\rho_1}{2}\|\psi_0(t)\|_2^2. \label{G1dot2ABNDD}
\end{flalign}

Substituting \eqref{pertestNEW9}-\eqref{pertestNEW10} into \eqref{G2dotinitCAPERTo1}, and rearranging yields
\begin{flalign}
\dot G_2(t) \leq & - \frac{1}{2} \rho_2 \|\tilde{u}_\varsigma(\cdot,t)\|_{{L_2^n}}^2- \frac{1}{2} \rho_2 \|\tilde{u}_t(\cdot,t)\|_{{L_2^n}}^2   + \rho_2\|f(\cdot,t)\|_{L_2^n}^2+\frac{\rho_2}{4}\|\tilde{u}_\varsigma(\cdot,t)\|_{L_2^n}^2   \nonumber \\
 & +\frac{1}{2} \rho_2 (1+c_0^2) \|\tilde{u}_t(0,t) \|_2^2 +\frac{1}{2} \rho_2 \|\psi_0(t)\|_2^2  + \frac{\rho_2c_0}{2}\|\tilde{u}_t(0,t)\|_2^2+\frac{\rho_2c_0}{2}\|\psi_0(t)\|_2^2.   \label{G2tinitCRToBNDD1}
\end{flalign}

Combining together \eqref{EdotCALONGpeti2SUBBND}-\eqref{G2tinitCRToBNDD1} one obtains after some straightforward manipulations the estimation
\begin{small}
\begin{flalign}
\dot V(t) = & \; \dot E(t)+\dot G_1(t)+\dot G_2(t) \nonumber \\
\leq & - \frac{1}{4}\rho_2 \|\tilde{u}_\varsigma(\cdot,t)\|_{{L_2^n}}^2 - \frac{1}{2} \left( \rho_2-\rho_1{-\xi_1}\right) \|\tilde{u}_t(\cdot,t)\|_{{L_2^n}}^2- \rho_1 \left[k_1 \lambda_m(\mathbfcal{M})-\frac{1}{2}c_0{ -\frac{3}{2} } \right] \| \tilde{u}(1,t)\|_2^2  \nonumber \\ &- \left[k_2 \lambda_m(\mathbfcal{M}){ -\frac{1}{2}} -\frac{1}{2}\rho_1\right] \| \tilde{u}_t(1,t)\|_2^2  - \left[c_0{ -\frac{1}{2}\xi_2} -\frac{1}{2}\rho_1c_0-\frac{1}{2}\rho_2 (1+c_0+c_0^2)\right] \| \tilde{u}_t(0,t)\|_2^2 \nonumber \\
& +\frac{1}{2}\left[\frac{1}{\xi_2} + \rho_1 + \rho_2(c_0+1) \right] \|\psi_0(t)\|_2^2+ \|\psi_1(t)\|_2^2  + \left(\frac{1}{2}\xi_1 +\frac{1}{2}\rho_1 + \rho_2 \right) \|f(\cdot,t)\|_{L_2^n}^2. \label{Vdot3CAPERT}
\end{flalign}
\end{small}

Provided that conditions \eqref{cond1lCAwow} hold and the arbitrary coefficients $\rho_1$ and $\rho_2$ meet the inequalities \eqref{cond1lCAPRTB}-\eqref{cond1lCA2PRTBB}, it is therefore concluded by \eqref{Vdot3CAPERT} that
\begin{equation}\label{vdotPERT1}
    \dot V(t) \leq  -  \mu_2 V_0(t)+ q_0 \|\psi_0(t)\|_2^2+\|\psi_0(t)\|_2^2+q_f\|f(\cdot,t)\|_{L_2^n}^2
\end{equation}
where $\mu_2>0$ is defined in \eqref{mu} and parameters $q_0$, $q_f$ are defined in \eqref{vdotPERTpar}. By virtue of \eqref{cond15CA} and \eqref{cond14CATEOPERT}, estimation \eqref{vdotPERT1} yields
\begin{equation}\label{vdotPERT}
    \dot V(t) \leq - \frac{\mu_2}{\tau_2} V(t)+ q_0 \|\psi_0(t)\|_2^2+\|\psi_0(t)\|_2^2+q_f\|f(\cdot,t)\|_{L_2^n}^2,
\end{equation}
where $\tau_2>0$ is given in \eqref{tau}. Applying the Comparison Lemma 3.4 from \cite{khalil2002nonlinear} to \eqref{vdotPERT} yields the ISS property
\begin{flalign}
     V(t) \leq &   V(0) e^{-\frac{\mu_2}{\tau_2}t}  + \frac{\tau_2 q_0}{\mu_2}  E_s\left(\|\psi_0(t)\|_2^2\right) + \frac{\tau_2}{\mu_2} E_s\left(\|\psi_1(t)\|_2^2\right)+  \frac{\tau_2 q_f}{\mu_2} E_s\left(\|f(\cdot,t)\|_{L_2^n}^2\right).\label{vdotPERDDDDT}
\end{flalign}
It follows that the weak solutions of the error system \eqref{sect2:errDynPERT}-\eqref{sect2:errBCPERT} do not escape to infinity in finite time. Hence, these solutions are globally extendible to the right for all $t>0$. To complete the proof it remains to note that coupling  \eqref{vdotPERDDDDT} to \eqref{cond14CATEOPERT} it directly follows \eqref{ISS23}. Theorem \ref{lemma1:VconvPERT} is proven.
\end{proof}
\vspace{.1cm}
\begin{myrem}\label{rk4iss}
By exploiting Lemma \ref{lemma1:pisanoResult} and  considering definition \eqref{cond15CA} one derives the inequality $\|\tilde{u}(\cdot,t)\|_{{L_2^n}}^2 \leq 2 V_0(t)$,
due to which the ISS inequality \eqref{ISS23} straightforwardly yields a similar estimation
directly involving the tracking error norm $\|\tilde{u}(\cdot,t)\|_{{L_2^n}}^2 $. 
\end{myrem}
\vspace{.1cm}


\section{Simulation Results}\label{sect4}

A network including one leader and three followers is considered, with the communication topology displayed in Fig.~\ref{topology} (where agent 0 represents the leader). Matrix $\mathbfcal{M}$ is
\begin{equation}
  \mathbfcal{M}=\begin{bmatrix}
                2 & - 1 & 0 \\
                -1 & 2 & -1 \\
                0 & -1 & 1 \\
              \end{bmatrix}.
\end{equation}
whose minimum and maximum eigenvalues are $\lambda_m(\mathbfcal{M})=0.19$ and $\lambda_M(\mathbfcal{M})=3.2$. The $c_0$ parameter entering the leader's and followers' boundary condition is set as $c_0=2.5$. The initial agents' transversal displacement is $u^0(\varsigma)=[5\cos(2 \pi \varsigma) , \cos(\pi \varsigma), -5 \cos(\pi \varsigma)]$, for the followers, and $u_0(x,0)=10\cos(2\pi \varsigma)$ for the leader, whereas the initial agents' velocities are $u^0_{t}(\varsigma)=[\varsigma,2\varsigma,3\varsigma]$ for the followers and $u_{0,t}(\varsigma)=0$ for the leader.
The boundary control gains were set as $k_1=30$ and $k_2=10$ in accordance with \eqref{cond1lCA} and \eqref{cond1lCAwow}.  We ran three simulation tests.  In Test 1, the case where no perturbations affect agent's dynamics ($\psi_0(t)=\psi_1(t)=0_3$ and $f(\zeta,t)=0_3$) was considered. Figure~\ref{fig:01} reports the results of Test 1 by showing the norm of the tracking error error and the spatio-temporal profile of the deviation between the state of the leader and that of follower one, which both confirm that the states of all follower agents will be synchronizing with that of the leader.
 In the Test 2 and Test 3, two distinct perturbed situations are studied. Particularly, in Test 2, we consider the perturbations $\psi_0(t)=\psi_1(t)=10\cos(10t)1_3$ and $f(\varsigma,t)=10\cos(10t) 1_3$ whereas in Test 3 they are set as $\psi_0(t)=\psi_1(t)=50\cos(10t)1_3$ and $f(\varsigma,t)=50\cos(10t)1_3$. Figure~\ref{fig:02} shows the norm of the tracking error in the Test 2 and Test 3. As expected, in the steady state the norm shown in Figure~\ref{fig:02}-right is five times higher than that of Fig.~\ref{fig:02}-left. 
\begin{figure}
  \centering
  \includegraphics[width=3in]{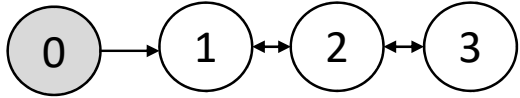}
  \caption{Considered communication topology.}\label{topology}
\end{figure}

\begin{figure}
  \centering
  \includegraphics[width=3in]{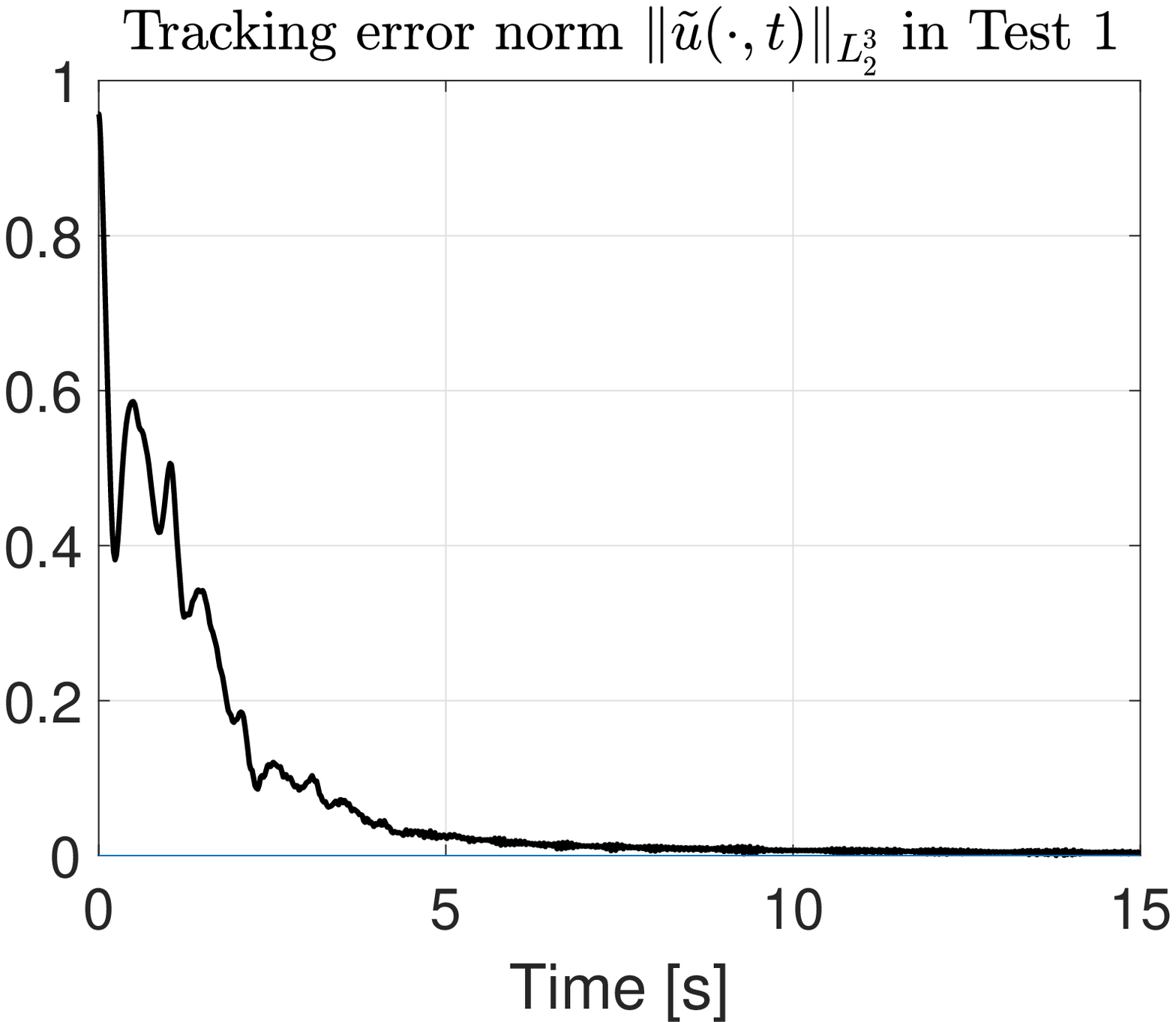}
  \includegraphics[width=3in]{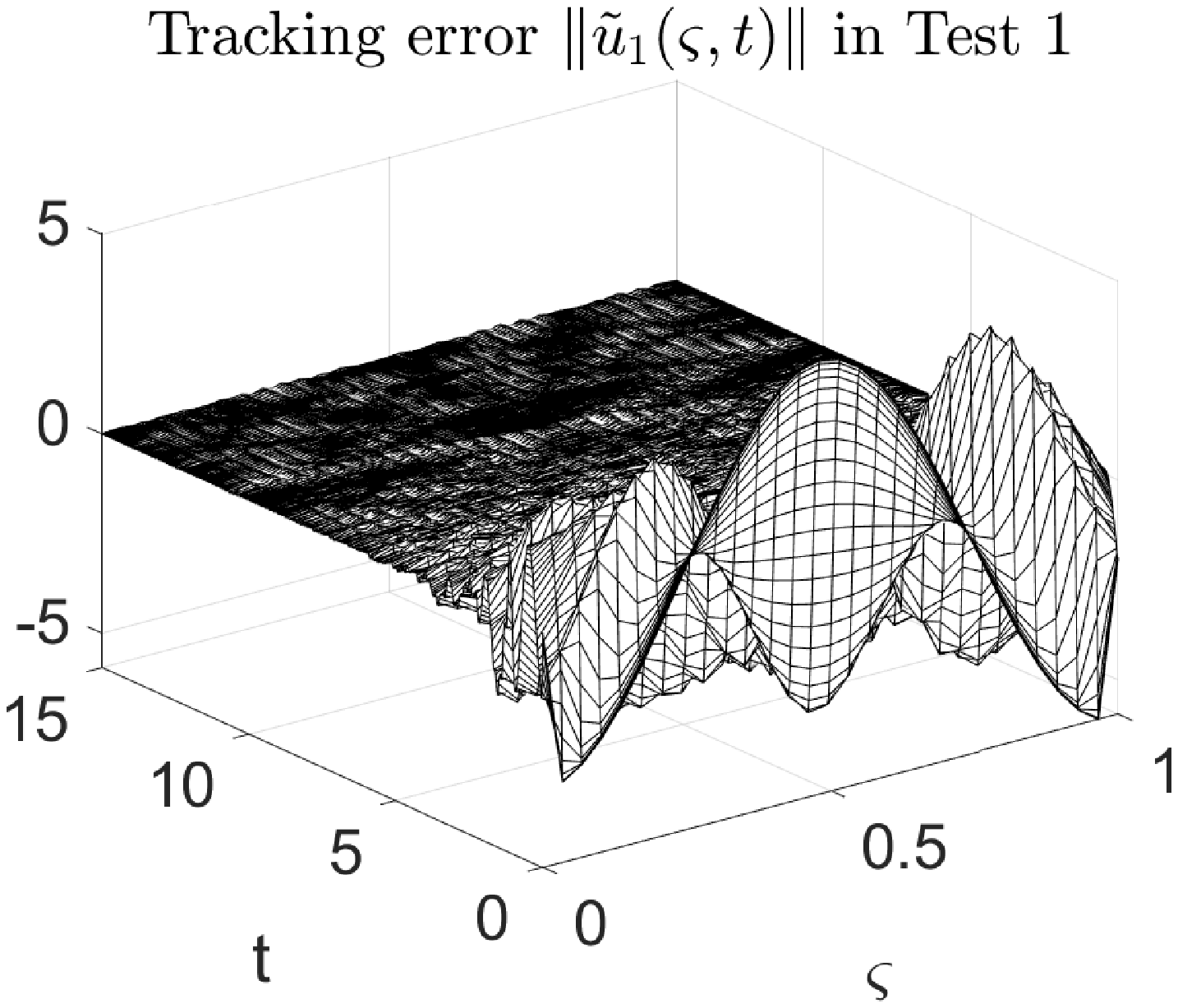}
  \caption{Results  of Test 1: (left) $L_2^3$-norm of the tracking error; (right) Spatio-temporal profile of the error variable $\tilde{u}_1(\varsigma,t)$.}\label{fig:01}
\end{figure}

\begin{figure}
  \centering
  \includegraphics[width=3in]{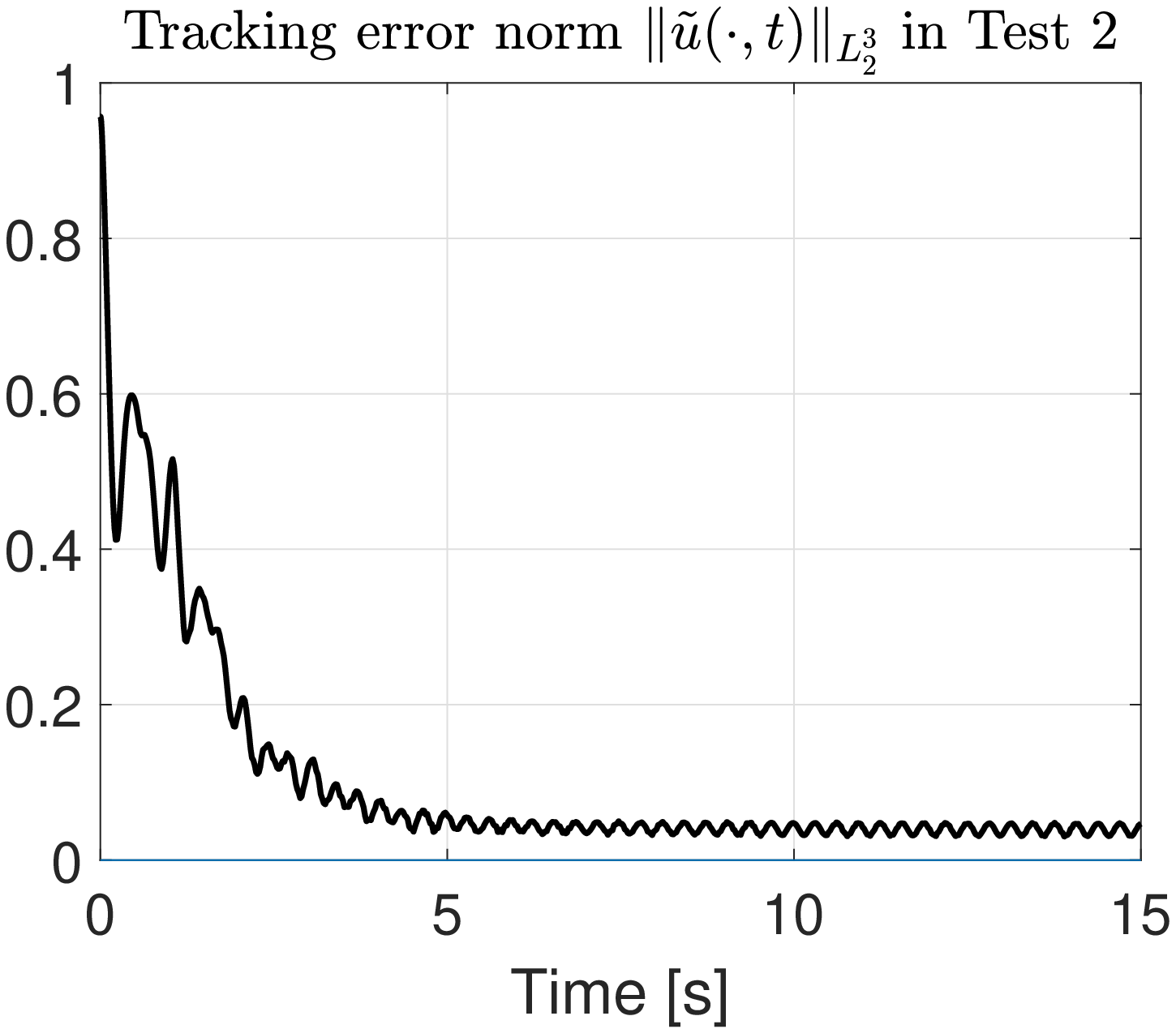}
  \includegraphics[width=3in]{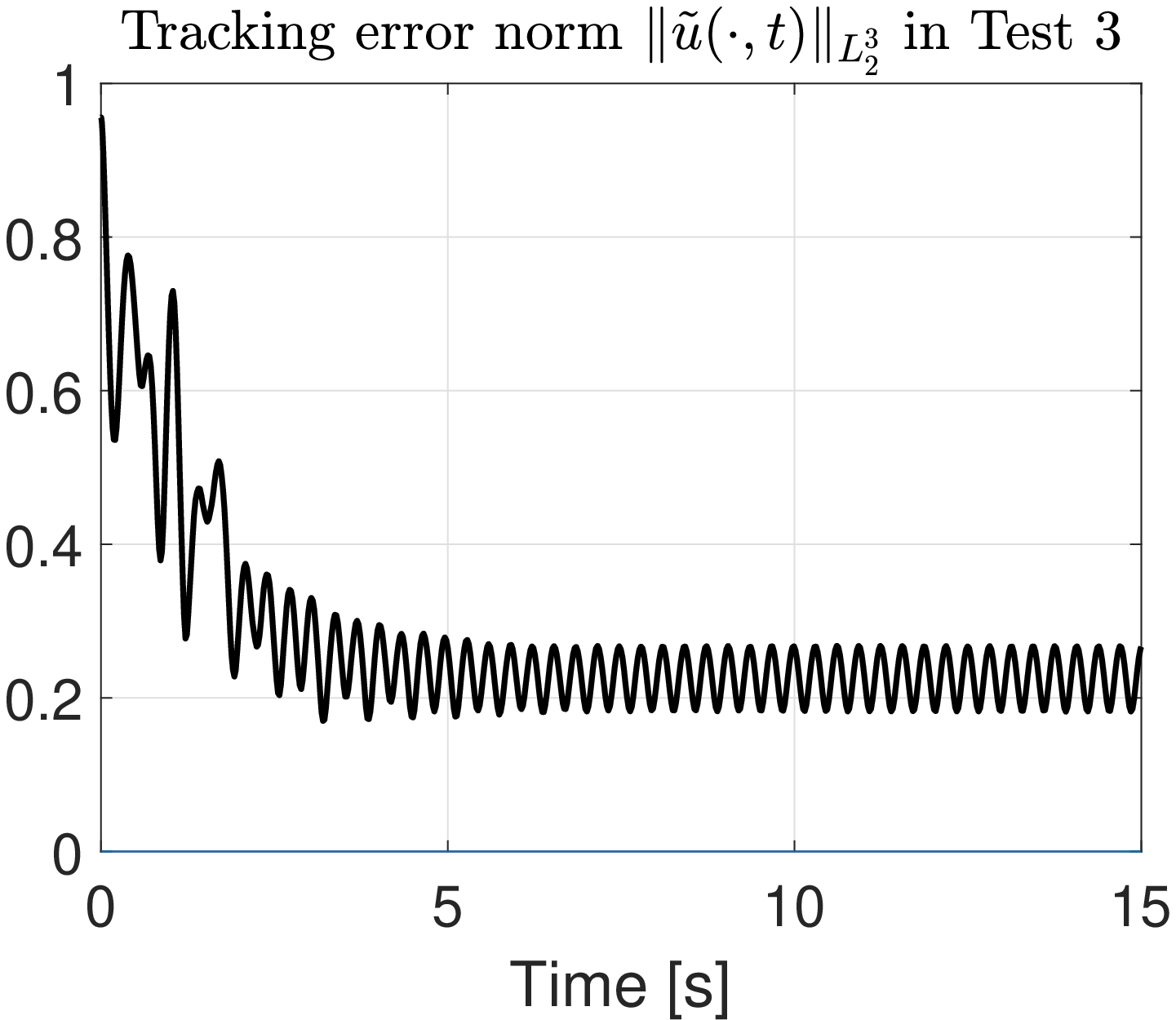}
  \caption{$L_2^3$-norm of the tracking error in Test 2 (left) and Test 3 (right).}\label{fig:02}
\end{figure}

\section{Conclusion}\label{sect5}

In this paper, the leader follower consensus problem  has been addressed with reference to agents's dynamics modeled by boundary-controlled wave equations. Among the future problems to be tackled, the (leaderless or leader-following) consensus problem for distributed parameter multi-agent systems with directed, and possibly switching, communication topology appears to be interesting and meaningful. 

%



%


\end{document}